\documentclass[11pt]{article}
\usepackage{amsmath,amssymb,amsthm}
\usepackage{booktabs}
\newcommand{\dif}{\mathrm{d}}
\usepackage[margin=1in]{geometry}
\usepackage{hyperref}
\usepackage{graphicx}
\usepackage{capt-of}
\usepackage{comment}

\newtheorem{theorem}{Theorem}[section]
\newtheorem{proposition}{Proposition}[section]
\newtheorem{corollary}{Corollary}[section]
\newtheorem{lemma}{Lemma}[section]
\newtheorem{example}{Example}[section]
\newtheorem{remark}{Remark}[section]
\newtheorem{definition}{Definition}[section]

\newcommand{\E}{\mathbb{E}}
\newcommand{\be}{\mbox{\boldmath $e$}}
\newcommand{\Var}{\mathrm{Var}}
\newcommand{\Cov}{\mathrm{Cov}}
\newcommand{\spr}{\mathrm{spr}}
\newcommand{\dfn}{\stackrel{\Delta} {=}}

\title{Statistics of the Compression Ratio of a Variable-to-Variable Code:\\
Exact Moments and Asymptotic Behavior\\
\large(Extended version, with full proofs)}
\author{Neri Merhav}

\begin{document}
\maketitle
\thispagestyle{empty}

\begin{center}
The Viterbi Faculty of Electrical and Computer Engineering\\
Technion - Israel Institute of Technology\\
Technion City, Haifa 3200003, ISRAEL\\
E--mail: {\tt merhav@technion.ac.il}\\
\end{center}
\vspace{1.5\baselineskip}
\setlength{\baselineskip}{1.5\baselineskip}

\begin{abstract}
A variable-to-variable (V2V) length code parses a source sequence into
phrases of variable length and maps each phrase to a binary codeword of,
generally, a different random length. After encoding $n$ phrases, the
realized compression ratio $R_n=\Lambda_n/\Sigma_n$ -- total codeword
length over total source-symbol count -- is the finite-sample counterpart
of the code's asymptotic rate $\rho$, to which it converges only as
$n\to\infty$. This paper first derives exact formulas for all integer moments
of $R_n$ for a given discrete memoryless source (DMS). 
Specifically, we obtain a closed-form formula for every moment $\E\{R_n^k\}$ as a
one-dimensional integral involving only single-phrase moment
generating functions of
the pair $(L,\ell)$ -- the phrase length, in source symbols, and
codeword length, in bits. From these moments we derive an Edgeworth
approximation to the cumulative distribution function (CDF) of $R_n$
that is substantially more accurate than the central
limit theorem (CLT) approximation. Using the Laplace method of integration, 
we also derive explicit closed-form formulas 
for the bias constant $C=\lim_{n\to\infty}n(\E\{R_n\}-\rho)$ and for 
the variance constant $\lim_{n\to\infty}n\cdot\Var\{R_n\}$. The
analysis extends to Markov sources via state-indexed matrices with
a redundancy formula obtained in closed form.

On the coding-theoretic side, we cast V2V length codes as finite-state
encoders and apply a generalized Kraft inequality for a
compression-rate lower bound, and give a structural decomposition of
the bias coefficient that separates cleanly across variable-to-fixed (V2F) length
codes, fixed-to-variable (F2V) length codes, and
V2V length codes. Applied to the Khodak code of Bugeaud, Drmota,
and Szpankowski, this decomposition shows that its improved
performance is reflected in its smaller bias
constant.
\end{abstract}

\section{Introduction}

\subsection{Objectives and motivation}

A variable-to-variable (V2V) length source code is the most general
member of the family of lossless codes considered here. It parses the
source into phrases of variable length and encodes each phrase with a
binary codeword of variable length, combining the structural advantages of
variable-to-fixed
(V2F) length coding -- which adapts the parsing to source statistics -- with
those of fixed-to-variable (F2V) length coding -- which adapts codeword lengths
to symbol probabilities.

The standard performance measure for a V2V length code is its asymptotic
compression rate, $\rho = \E\{\ell\}/\E\{L\}$, the ratio of expected
codeword length to expected phrase length. Let $\Lambda_n$ and
$\Sigma_n$ denote, respectively, the total codeword length (in bits)
and the total source-symbol count after $n$ encoded phrases. By the
strong law of large
numbers ($\Lambda_n/n\to\E\{\ell\}$ and $\Sigma_n/n\to\E\{L\}$ almost surely) and the
continuous mapping theorem \cite{vandervaart}, the realized compression ratio $R_n =
\Lambda_n/\Sigma_n$ converges to $\rho$ almost surely as $n\to\infty$. This
asymptotic rate, however, answers neither how quickly convergence occurs
nor how large the fluctuations around $\rho$ are at any finite number of
encoded phrases $n$. Both questions are practically important. In buffer
analysis for variable-length codes, the fluctuation of the compression
ratio over a finite window directly determines buffer overflow
probabilities. In code design, comparing candidate codes requires
evaluating their performance at finite $n$, not merely their limit.

We address these questions by deriving exact formulas for all integer
moments of $R_n$, for a given V2V length code and a memoryless source. This is
fundamentally an analysis of a \emph{given} code's finite-sample
behavior, not a code-design method; the joint design of the parsing
tree and codeword lengths for a target $n$ remains open (see
Section~\ref{sec:design}). Our main
result is an exact formula for $\E\{R_n^k\}$ for every positive integer $k$ as a
single one-dimensional integral, requiring only the single-phrase
joint moment generating function (MGF) of $(L,\ell)$.
The underlying tool is an integral
representation of $1/\Sigma_n$ as well as its $k$-th power as
a Laplace transform, which enables turning a ratio of sums into a one-dimensional integral of
simple single-phrase quantities.
The novelty is in the systematic extension to
\emph{every} integer moment via the set-partition combinatorics of
Theorem~\ref{thm:general}; the fact that the resulting formulas are
exact at every finite $n$ rather than asymptotic; and the further
extension to Markov sources (Section~\ref{sec:markov}), which replaces
scalar moment generating functions by matrix-valued ones and requires a corresponding
eigenvalue-perturbation argument with no scalar analogue.
From these moments we derive closed-form
asymptotic formulas for the bias constant
$\lim_{n\to\infty}n(\E\{R_n\}-\rho)$
and the variance constant
$\lim_{n\to\infty} n\Var\{R_n\}$, and we construct an 
Edgeworth approximation to the CDF of $R_n$ that is substantially more accurate than the
central limit theorem (CLT). Before turning to this analysis, Section~\ref{sec:background} records
some necessary background, including a compression-rate lower bound
obtained by observing that a V2V length code is an instance of the finite-state
encoders covered by the generalized Kraft inequality of \cite{merhav-gki};
this observation sets the stage for the moment analysis but is logically
independent of it. Unlike the classical route to such a bound --
concatenating many phrases and invoking a law-of-large-numbers argument
to identify the limiting rate -- the generalized-Kraft-inequality bound
is a direct, purely algebraic consequence of a single spectral-radius
inequality, with no block-length limit theorem involved.

\subsection{Related work}

The V2F and V2V length coding literature is relatively sparse. Existing work,
surveyed in \cite{drmota-szpan}, focuses on the average redundancy of
specific named codes (Huffman, Tunstall, Khodak, Boncelet) for known
sources, using analytic combinatorics and Mellin-transform 
methods; Savari and Szpankowski \cite{savari-szpan} give an early
analysis specifically of V2V length codes along these lines. The moments of $R_n$ at finite $n$, as distinct from
the asymptotic rate $\rho$, have not been addressed.

\paragraph{V2F length coding.}
Tunstall's algorithm \cite{tunstall} constructs an optimal uniquely
parsable V2F dictionary by a simple greedy procedure. Savari and Gallager
\cite{savari-gallager} established the asymptotic redundancy of Tunstall
codes using renewal theory (modeling self-information as a regenerative
process) and Markov reward machinery; their analysis brings in second
moments of inter-renewal times, but only as a correction to the precision
of an asymptotic mean formula, not as a variance target. Drmota, Reznik,
and Szpankowski \cite{drs} established a CLT and a variance formula for the
V2F phrase length $L$, via renewal theory and Mellin transforms, as the
dictionary size $M\to\infty$. Both bodies of work address the phrase
length $L$ alone, under a fixed-length codeword assignment, as a function
of a growing dictionary; our asymptotic variable is the number of phrases
$n$ processed by a fixed code, and our object is the ratio $R_n =
\Lambda_n/\Sigma_n$ rather than $L$.

\paragraph{Plurally parsable dictionaries.}
Savari \cite{savari-pp, savari-survey} showed that relaxing unique
parsability can outperform the Tunstall dictionary of the same size for
highly predictable sources, with exact results for specific binary-source
families but no general algorithm.

\paragraph{Delay and redundancy.}
Shayevitz, Meron, Feder, and Zamir \cite{smfz} characterize the
redundancy--delay trade-off for V2V and related code families. Their
object is the expected per-symbol code length as a function of a
worst-case delay constraint, not the moments of $R_n$ as the number of phrases grows.

\paragraph{Codeword-length distributions.}
Courtade and Verd\'u \cite{cv} study the CGF and Gaussian
approximation of the codeword length of an optimal F2V length code on an
$n$-symbol block, tracing back to Strassen's foundational second-order
asymptotic analysis \cite{strassen62} and complementing
Kontoyiannis's earlier second-order
noiseless source coding theorems \cite{kontoyiannis97} (which also
covers Markov sources) and the non-asymptotic refinements of
Kontoyiannis and Verd\'u \cite{kontoyiannis-verdu14}. This is a single i.i.d.\ sum, not a ratio of two
correlated sums with a random denominator; the ratio structure is specific
to V2V and V2F coding and is the source of the combinatorial complexity
addressed in the present paper.

\paragraph{Generalized Kraft inequality.}
Merhav \cite{merhav-gki} establishes a generalized Kraft inequality for
finite-state encoders, strictly improving the earlier result of Ziv and
Lempel \cite{ziv-lempel}. We use this to derive an explicit lower bound on
the V2V compression ratio in terms of the parsing tree structure.

\section{System model and background on V2V codes}

\label{sec:background}
\subsection{Source model and notation}

We consider a discrete memoryless source (DMS) $X_1,X_2,\dots$, where
each symbol $X_i$ ($i$ -- positive integer) is a random variable taking values
in a finite alphabet $\mathcal{X}$ of cardinality $\alpha$; specific
realizations are denoted by $x_1,x_2,\dots$. Each symbol is drawn
according to the source distribution $P=\{P(x),~x\in\mathcal{X}\}$.
The single-letter entropy is
\begin{equation}
\label{eq:entropy}
H(X) = -\sum_{x\in\mathcal{X}} P(x)\log_2 P(x).
\end{equation}
A V2V length code is specified by two components:
\begin{enumerate}
\item[1)] A \emph{source dictionary} $\mathcal{D}$: a prefix-free set
of variable-length source strings, represented by the leaves of a
complete $\alpha$-ary tree. The size of the dictionary is
$|\mathcal{D}|=M$, and a generic member of $\mathcal{D}$ is denoted by
$y$. A source string $x_1,x_2,\dots$ is parsed by the dictionary
parser into a succession of phrases $y_1,y_2,\dots$; the corresponding
random variables $Y_1,Y_2,\dots$, induced by the source $P$, form
another DMS with alphabet $\mathcal{D}$ (of size $M$) and phrase
probabilities $\{Q(y),\,y\in\mathcal{D}\}$, where each $Q(y)$ is given by the product of
the letter probabilities that make up the phrase $y$. Let $L(y)$,
$y\in\mathcal{D}$, denote the length of phrase $y$, in source symbols.
We denote by $H(Y)\dfn-\sum_{y\in\mathcal{D}}Q(y)\log_2 Q(y)$ the phrase entropy.

\item[2)] A variable-length uniquely decodable (UD) code, mapping
$\mathcal{D}$ into a set of variable-length binary strings. Let
$\ell(y)$ denote the length, in bits, of the codeword assigned to
$y\in\mathcal{D}$.
\end{enumerate}

We index the successive phrases produced by the parser by
$i=1,2,\dots$; the $i$-th phrase is $Y_i$, of length $L(Y_i)$
source symbols, encoded by a codeword of length $\ell(Y_i)$
bits. For a DMS, the pairs $(L(Y_i),\ell(Y_i))_{i\ge1}$ are
independent and identically distributed; we write $Y$ for a generic
random phrase with distribution $Q$, so that $L(Y)$ and
$\ell(Y)$ denote its length and its codeword length, respectively. When it causes no
ambiguity, we abbreviate $L(Y)$ and $\ell(Y)$ by $L$ and $\ell$; the
explicit argument $Y$ is retained whenever a formula defines a new
named quantity as an expectation. The
following single-phrase quantity appears throughout:
\begin{equation}
\mu_j(t)\dfn \E\{[\ell(Y)]^j e^{-tL(Y)}\},~j=0,1,2,\dots,~t\ge0. \label{eq:muj}
\end{equation}
Note that $\mu_0(t)=\E\{e^{-tL(Y)}\}$ is the MGF of
$-L(Y)$. We define the
mean phrase length and mean codeword length by
\begin{equation}
\label{eq:barLell}
\bar{L}\dfn\E\{L(Y)\}=\sum_{y\in\mathcal{D}} Q(y)L(y),
\qquad
\bar{\ell}\dfn\E\{\ell(Y)\}=\sum_{y\in\mathcal{D}} Q(y)\ell(y).
\end{equation}
The asymptotic compression rate of the code is defined as
\begin{equation}
\rho \dfn \frac{\bar\ell}{\bar L} = -\frac{\mu_1(0)}{\mu_0'(0)}, 
\end{equation}
where $\mu_0'(\cdot)$ denotes the derivative of $\mu_0(\cdot)$.
For any two random
variables $U,V$ appearing below, we write $\Var\{U\}\dfn\E\{U^2\}-(\E\{U\})^2$
and $\Cov\{U,V\}\dfn\E\{UV\}-\E\{U\}\E\{V\}$ for their variance and
covariance, respectively.

After encoding $n$ phrases, the total code length is
$\Lambda_n \dfn \sum_{i=1}^n \ell(Y_i)$ (in bits) and the total number of
source symbols consumed is $\Sigma_n \dfn \sum_{i=1}^n L(Y_i)$. The
\emph{realized compression ratio} over $n$ phrases is
\begin{equation}
\label{eq:Rndef}
R_n \dfn \frac{\Lambda_n}{\Sigma_n} = \frac{\sum_{i=1}^n
\ell(Y_i)}{\sum_{i=1}^n L(Y_i)}~~\frac{\mbox{bits}}{\mbox{symbol}}.
\end{equation}
By the strong law of large numbers,
$\Lambda_n/n\to\bar\ell$ and $\Sigma_n/n\to\bar L$ almost surely (a.s.),
and so, by the continuous mapping theorem, $R_n\to\rho$ a.s.\ as
$n\to\infty$.

\subsection{Structure and parsing}

The dictionary $\mathcal{D}$ of a V2V length code is represented by a full
$\alpha$-ary rooted tree, i.e., every internal node has 
$\alpha$ children, one per each possible source symbol; the $M$ leaves correspond to
the $M$ phrases $y\in\mathcal{D}$. The parsing rule is as follows: start at
the root, follow the edge labeled by each successive source symbol
until reaching a leaf. The leaf reached identifies the current phrase
$y$; the encoder emits the binary codeword assigned to that leaf and
resets to the root for the next phrase.

Let $J$ denote the number of internal nodes of the parsing tree,
including the root. Unique parsability forces $M=J(\alpha-1)+1$, so
\begin{equation}
\label{eq:Jdef}
J = \frac{M-1}{\alpha-1}.
\end{equation}
The phrase length satisfies $L(y)\ge1$ for every $y\in\mathcal{D}$, and
the codeword length $\ell(y)$ takes values in
$\{1,2,\dots,\ell_{\max}\}$ for some maximum codeword length
$\ell_{\max}$.

\subsection{V2V as a finite-state encoder}

A \emph{finite-state (FS) encoder} over the source alphabet $\mathcal{X}$
and output alphabet $\mathcal{B}$ is specified by a finite state set
$\mathcal{Z}$ (of size $|\mathcal{Z}|$), a \emph{next-state function} $g:\mathcal{Z}\times\mathcal{X}
\to\mathcal{Z}$, and an \emph{output function} $f:\mathcal{Z}\times\mathcal{X}
\to\mathcal{B}^*$, where $\mathcal{B}^*$ denotes a set of finite
strings over $\mathcal{B}$, possibly including the empty string $\emptyset$ of
length zero. 
For a string $s\in\mathcal{B}^*$, we write $\ell[s]$ for its length in
bits, with the convention $\ell[\emptyset]=0$; this is consistent with
$\ell(y)$ of Section~\ref{sec:background} when $s=c(y)$ is the codeword
of leaf $y$. 
When a source sequence
$x_1,x_2,\dots$ is fed into the encoder, the state evolves according to
the recursion $z_{i+1} = g(z_i,x_i)$, $i=1,2,\dots$ (with $z_1$ being a fixed
initial state), and the encoder outputs the sequence of strings from
$\mathcal{B}$,
$f(z_1,x_1), f(z_2,x_2),\dots$

The V2V length encoder described in the preceding subsection can be cast as
an FS encoder with a state set $\mathcal{Z}$ given by the set
of $J$ internal nodes of the parsing tree (thus $|\mathcal{Z}|=J$), source alphabet
$\mathcal{X}$ (of cardinality $\alpha$), and output alphabet
$\mathcal{B}=\{0,1\}$ (binary codewords). The encoder is in state
$z\in\mathcal{Z}$ when it has partially matched a phrase and is currently
at internal node $z$. For source symbol $x$ processed while in state $z$,
the next-state and output functions are:
\begin{equation}
\label{eq:nextstate}
g(z,x) = \begin{cases}
z' & \text{if the child of $z$ along edge $x$ is an internal node $z'$,}\\
\mbox{root} & \text{if the child of $z$ along edge $x$ is a leaf,}
\end{cases}
\end{equation}
\begin{equation}
\label{eq:output}
f(z,x) = \begin{cases}
\emptyset & \text{ if }g(z,x)\ne \mbox{root},\\
c(y) & \text{if the child of $z$ along $x$ is leaf $y$.}
\end{cases}
\end{equation}
The encoder emits output only when a phrase is completed (i.e., when the
child is a leaf), at which point it resets to the root.

\begin{definition}[Information lossless encoder, \cite{ziv-lempel}]
\label{def:IL}
An FS encoder is \emph{information lossless (IL)} if, given the initial
state, the output string, and the final state, the input string can be
uniquely recovered.
\end{definition}

The IL property holds for uniquely parsable dictionaries, since each
codeword maps to a unique leaf, which together with the initial parsing
state determines the phrase consumed. For plurally parsable dictionaries
\cite{savari-pp} -- where a source string may have more than one valid
parse -- the IL property may fail, and the framework below requires
modification.

For $m\ge1$, $g(z,x^m)$ will denote the final state of the encoder when
the initial state is $z$ and it processes the $m$ successive inputs
$x^m=(x_1,\dots,x_m)\in\mathcal{X}^m$; $f(z,x^m)$ denotes the
corresponding output string emitted in response to $x^m$.

\subsection{The generalized Kraft inequality and a lower bound on the compression ratio}

The classical Kraft--McMillan inequality, $\sum_{y} 2^{-\ell(y)}\le1$,
characterizes a limitation on lossless codes over a fixed alphabet but does not directly
apply to FS encoders. For the IL FS encoder
above, the appropriate generalization uses the following matrix.

\begin{definition}[Kraft matrix, \cite{merhav-gki}]
For an IL FS encoder with state space $\mathcal{Z}$, the Kraft matrix
$\mathbf{K}\in\mathbb{R}^{|\mathcal{Z}|\times|\mathcal{Z}|}$ has entries
\begin{equation}
\label{eq:Kdef}
\mathbf{K}_{zz'} = \sum_{\{x:~g(z,x)=z'\}} 2^{-\ell[f(z,x)]}, ~~~z,z'\in\mathcal{Z},
\end{equation}
\end{definition}

Let $\spr(\mathbf{K})$ denote the spectral radius of $\mathbf{K}$. The
generalized Kraft inequality reads:

\begin{theorem}[Generalized Kraft inequality, \cite{merhav-gki}]
\label{thm:gki}
For every IL FS encoder, $\spr(\mathbf{K})\le1$.
\end{theorem}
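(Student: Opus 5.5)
The plan is to bound the entries of $\mathbf{K}^m$ combinatorially and then use Gelfand's formula $\spr(\mathbf{K})=\lim_{m\to\infty}\|\mathbf{K}^m\|^{1/m}$. Since $\mathbf{K}$ has nonnegative entries, so does every power. Expanding the matrix product along paths gives
\begin{equation*}
(\mathbf{K}^m)_{zz'}=\sum_{\{x^m:~g(z,x^m)=z'\}} 2^{-\ell[f(z,x^m)]},
\end{equation*}
where $\ell[f(z,x^m)]$ is the total output length along the path, i.e.\ the sum of the per-step lengths. This identity follows by induction on $m$, using the recursion $z_{i+1}=g(z_i,x_i)$ and additivity of output lengths under concatenation.

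Next I use the IL property. For fixed $z,z'$, the map $x^m\mapsto f(z,x^m)$ restricted to $\{x^m: g(z,x^m)=z'\}$ is injective, since the initial state, output string and final state determine the input. Hence the number of $x^m$ in this set whose output has length exactly $k$ is at most $2^k$. The output length per step is at most $\ell_{\max}$, so $k\le m\ell_{\max}$. Therefore
\begin{equation*}
(\mathbf{K}^m)_{zz'}\le\sum_{k=0}^{m\ell_{\max}} 2^k\cdot 2^{-k}=m\ell_{\max}+1.
\end{equation*}

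Summing over $z'$ bounds the max-row-sum norm by $\|\mathbf{K}^m\|_\infty\le |\mathcal{Z}|(m\ell_{\max}+1)$, which grows only polynomially in $m$. Gelfand's formula then gives
\begin{equation*}
\spr(\mathbf{K})\le\lim_{m\to\infty}\bigl[|\mathcal{Z}|(m\ell_{\max}+1)\bigr]^{1/m}=1.
\end{equation*}

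The main subtlety is the counting step. Injectivity holds only for fixed final state $z'$, so the bound must be applied entrywise, and outputs of different lengths are grouped by length. Injectivity alone, with no prefix condition, suffices because we count strings of each exact length separately; this is why the bound is polynomial rather than $\le 1$ for finite $m$. For the V2V encoder, injectivity can also be checked directly: the output is a concatenation of UD codewords, which determines the completed leaves, and the final internal node determines the partial phrase.
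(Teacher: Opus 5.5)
Your proof is correct. The paper does not prove this statement itself: Theorem~\ref{thm:gki}, like Theorem~\ref{thm:irred}, is quoted from \cite{merhav-gki}, so there is no in-paper argument to compare against line by line.

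What you have written is the Ziv--Lempel-type estimate that the paper mentions only as the weakness Theorem~\ref{thm:irred} removes. Injectivity for each fixed pair $(z,z')$, combined with counting outputs by exact length, gives $(\mathbf{K}^m)_{zz'}\le m\ell_{\max}+1$, which is linear in $m$. Linear growth is harmless for this statement, because the spectral radius only detects exponential growth of $\|\mathbf{K}^m\|$. Your route therefore gives a short, self-contained proof for every IL FS encoder, with no irreducibility hypothesis.

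What it does not give is the $m$-independent entrywise bound $2^{(J-1)\ell_{\max}}$ of Theorem~\ref{thm:irred}. That bound needs irreducibility and a finer argument, and it is what produces the $O(1/m)$ correction in Corollary~\ref{cor:lower}. Feeding your linear bound into the same Kraft-sum-with-slack argument would only give a correction of order $(\log m)/m$.

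Two small points:
\begin{itemize}
\item For a general FS encoder you should define $\ell_{\max}:=\max_{z,x}\ell[f(z,x)]$, which is finite because $\mathcal{Z}\times\mathcal{X}$ is finite.
\item Gelfand's formula is more than you need. For every $m$ you already have $[\spr(\mathbf{K})]^m=\spr(\mathbf{K}^m)\le\|\mathbf{K}^m\|_\infty\le|\mathcal{Z}|(m\ell_{\max}+1)$. Taking $m$-th roots and letting $m\to\infty$ finishes the proof.
\end{itemize}
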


\begin{theorem}[Irreducible bound, \cite{merhav-gki}]
\label{thm:irred}
If $\mathbf{K}$ is irreducible, then for all $z,z'\in\mathcal{Z}$ and every
integer $m\ge1$,
\begin{equation}
\label{eq:Kbound}
(\mathbf{K}^m)_{zz'}=
\sum_{\substack{x^m:~g(z,x^m)=z'}} 2^{-\ell[f(z,x^m)]}
\le 2^{(J-1)\ell_{\max}}.
\end{equation}
\end{theorem}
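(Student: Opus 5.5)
The plan is to prove Theorem~\ref{thm:irred} by a counting argument that uses the information-lossless property together with irreducibility, rather than any spectral information about $\mathbf{K}$. First, the identity $(\mathbf{K}^m)_{zz'}=\sum_{x^m:\,g(z,x^m)=z'}2^{-\ell[f(z,x^m)]}$ follows by induction on $m$ from definition (\ref{eq:Kdef}). Each entry of $\mathbf{K}$ is a sum over single symbols of $2^{-(\text{output length})}$. Expanding $(\mathbf{K}^m)_{zz'}$ as a sum over intermediate states $z_2,\dots,z_m$ therefore gives a sum over input strings consistent with the next-state function. The output lengths add along the path, so the weights multiply.

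For the bound, fix $z,z'$ and $m$. Irreducibility lets us choose, for every state $z'$, a return path $w$ from $z'$ back to $z$ of length at most $J-1$. This is possible because in an irreducible nonnegative matrix on $J$ states, any state reaches any other in at most $J-1$ steps, by taking a shortest path. Let $L(w)=\ell[f(z',w)]$. Each step emits at most $\ell_{\max}$ bits, so $L(w)\le (J-1)\ell_{\max}$. Appending $w$ to each $x^m$ with $g(z,x^m)=z'$ gives a map $x^m\mapsto x^m w$ into strings that start and end at $z$.

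By the IL property (Definition~\ref{def:IL}), with initial and final state both equal to $z$, distinct extended inputs of the same length yield distinct outputs. Moreover, the map from $x^m$ to $f(z,x^m w)$ stays injective across all $x^m$ of the fixed length $m$, since $w$ is fixed once $z'$ is fixed. I then need the outputs to form a prefix-free or uniquely decodable set, so that Kraft gives $\sum 2^{-\ell[f(z,x^m w)]}\le 1$. Injectivity alone only bounds the count of outputs of each length $k$ by $2^k$, and summing over $k$ gives a factor of the number of possible lengths rather than $1$.

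This is the main obstacle, and I expect to handle it in one of two ways. The first option is to use the V2V structure directly: $z$ is an internal node, and starting from the root the codewords concatenate phrase codewords of a UD code, so the output strings of complete-phrase sequences of fixed total symbol length form a UD set. A McMillan-type argument then applies. The second option is the standard trick of taking $N$-fold concatenations of length-$m$ loops at $z$. By IL, the map from $N$-tuples to concatenated outputs is injective, so $(\sum 2^{-\ell})^N\le N m\ell_{\max}+1$, which is polynomial in $N$. Taking $N$-th roots and letting $N\to\infty$ gives $\sum_{x^m:\,g(z,x^m w)=z}2^{-\ell[f(z,x^m w)]}\le1$. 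Note that $w$ depends only on $z'$, so all loops have length $m+|w|$ and the concatenation argument is uniform.

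Finally, $\ell[f(z,x^mw)]=\ell[f(z,x^m)]+L(w)$, so
\begin{equation*}
\sum_{x^m:\,g(z,x^m)=z'}2^{-\ell[f(z,x^m)]}=2^{L(w)}\sum 2^{-\ell[f(z,x^mw)]}\le 2^{(J-1)\ell_{\max}},
\end{equation*}
which is (\ref{eq:Kbound}).
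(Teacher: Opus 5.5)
Your proof is correct, but it follows a different route from the one the paper points to. The paper does not prove Theorem~\ref{thm:irred}; it quotes it from \cite{merhav-gki} and presents it as flowing from the spectral-radius inequality of Theorem~\ref{thm:gki}.

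The natural argument along those lines uses irreducibility through Perron--Frobenius. Take a positive vector $v$ with $\mathbf{K}v=\lambda v$, where $\lambda=\spr(\mathbf{K})\le1$. Then $(\mathbf{K}^m)_{zz'}v_{z'}\le(\mathbf{K}^mv)_z=\lambda^mv_z$. A return path of length $r\le J-1$ gives $\lambda^rv_{z'}=(\mathbf{K}^rv)_{z'}\ge(\mathbf{K}^r)_{z'z}v_z\ge2^{-r\ell_{\max}}v_z$. So in that route the theorem is really a bound on the ratio of Perron-vector components.

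You bypass both Theorem~\ref{thm:gki} and Perron--Frobenius. Your McMillan tensor-power argument on fixed-length loops at $z$ amounts to showing $(\mathbf{K}^n)_{zz}\le1$ for every $n$. This holds because $[(\mathbf{K}^n)_{zz}]^N\le(\mathbf{K}^{nN})_{zz}$, and the right side grows only linearly in $N$ by the IL counting (Ziv--Lempel) bound. Irreducibility enters only to supply one return word $w$, after which $(\mathbf{K}^m)_{zz'}\,2^{-\ell[f(z',w)]}\le(\mathbf{K}^{m+|w|})_{zz}\le1$.

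What each route buys:
\begin{itemize}
\item Yours is fully elementary and uses IL only for inputs of equal length.
\item The spectral route reuses a result the paper already states and handles all entries at once through a single eigenvector.
\item The spectral route also keeps the factor $[\spr(\mathbf{K})]^{m+r}$, which gives geometric decay in $m$ when $\spr(\mathbf{K})<1$.
\end{itemize}

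Two small fixes. First, in your second option the concatenated input has length $N(m+|w|)$, so the polynomial bound should read $N(m+|w|)\ell_{\max}+1$ rather than $Nm\ell_{\max}+1$; this does not affect the conclusion. Second, commit to that option and drop the first. Loops at a non-root state $z$ begin and end mid-phrase, so they are not sequences of complete phrases. Proving unique decodability of their output set would need the same IL argument as the second option anyway.
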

Note that the right-hand side is \emph{independent of $m$}. This is the key
improvement over the earlier generalized Kraft inequality of Ziv and
Lempel \cite{ziv-lempel}, whose analogous bound grows linearly in $m$;
this linear growth prevents one from extracting any useful rate lower
bound from the Ziv--Lempel result.

Substituting $J=(M-1)/(\alpha-1)$ into Theorem~\ref{thm:irred} and
taking the supremum over $m\ge1$ gives the following lower bound on
the compression ratio of any V2V length code, valid for any source (not
necessarily memoryless); this is eq.~(34) of \cite{merhav-gki}, whose
proof combines the per-entry bound of Theorem~\ref{thm:irred} with the
standard Kraft-sum-with-slack redundancy bound and is not reproduced
here:

\begin{corollary}[Lower bound on compression ratio]
\label{cor:lower}
For an irreducible, uniquely parsable V2V length code with $M$ leaves over an
$\alpha$-ary source alphabet and maximum codeword length $\ell_{\max}$,
the asymptotic compression rate $\rho$ satisfies
\begin{equation}
\label{eq:lowerbound}
\rho\ge\sup_{m\ge1}\Bigl[H(X_m\mid X^{m-1}) - \frac{C(M,\alpha,\ell_{\max})}{m}\Bigr],
\end{equation}
where $H(X_m\mid X^{m-1})$ is the conditional entropy (in bits per
source symbol) of the $m$-th source symbol given the preceding block
$X^{m-1}:=(X_1,\dots,X_{m-1})$, and
\begin{equation}
\label{eq:Cdef}
C(M,\alpha,\ell_{\max})\dfn 
2\log_2 J + (J-1)\ell_{\max}=
2\log_2\frac{M-1}{\alpha-1} +
\frac{(M-\alpha)\ell_{\max}}{\alpha-1}.
\end{equation}
\end{corollary}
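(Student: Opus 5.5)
Fix $m\ge1$ and bound the number of bits the FS encoder produces on a block of $m$ source symbols. Take the initial state to be the root. For each final state $z'$, Theorem~\ref{thm:irred} gives $\sum_{x^m:\,g(\mathrm{root},x^m)=z'}2^{-\ell[f(\mathrm{root},x^m)]}\le 2^{(J-1)\ell_{\max}}$. Summing over the $J$ possible final states bounds the total Kraft sum over all $x^m\in\mathcal{X}^m$ by $J\,2^{(J-1)\ell_{\max}}$.

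The output $f(\mathrm{root},x^m)$ alone need not determine $x^m$. Together with the final state it does, by the IL property. I will therefore augment each output by a fixed-length description of $g(\mathrm{root},x^m)$, costing $\lceil\log_2 J\rceil$ bits, or a prefix-free integer code of length about $\log_2 J$. Call the result $\tilde\ell(x^m)$.

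Apply the standard Kraft-sum-with-slack bound: if $\sum 2^{-\tilde\ell}\le S$, then $\E\,\tilde\ell(X^m)\ge H(X^m)-\log_2 S$. This follows from the log-sum inequality, i.e.\ nonnegativity of a divergence against the normalized $2^{-\tilde\ell}/S$. Here $\log_2 S\le \log_2 J+(J-1)\ell_{\max}$. Adding the extra $\log_2 J$ bits spent on the final state gives
\[
\E\,\ell[f(\mathrm{root},X^m)]\ \ge\ H(X^m)-2\log_2 J-(J-1)\ell_{\max}=H(X^m)-C(M,\alpha,\ell_{\max}).
\]
Alternatively, one can avoid the augmentation by conditioning on the final state $Z_{m+1}$. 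Then $H(X^m)\le H(X^m\mid Z_{m+1})+\log_2 J$, and the per-state Kraft sum handles the remainder. This also yields the $2\log_2 J$ term, with one $\log_2 J$ coming from summing over states.

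The next step passes from blocks to the rate. I will concatenate $N$ blocks of length $m$, i.e.\ run the encoder on $x^{Nm}$. The total output length is at least the sum over blocks of outputs, up to a boundary effect. The cleanest route is to apply the same bound directly to the single block $x^{Nm}$, since Theorem~\ref{thm:irred} is uniform in the block length. This gives $\E\,\ell[f(\mathrm{root},X^{Nm})]\ge H(X^{Nm})-C$. Dividing by $Nm$ and letting $N\to\infty$, the left side tends to $\rho$. This holds because the output over $Nm$ symbols differs from $\Lambda_n$ for the number of completed phrases only by at most $\ell_{\max}$ bits, and renewal/SLLN gives $\Lambda_n/\Sigma_n\to\rho$; for a general stationary source I would interpret $\rho$ as this limit. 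The right side tends to the entropy rate $\bar H=\lim_k H(X_k\mid X^{k-1})\le H(X_m\mid X^{m-1})$. This goes the wrong way, so the constant $C/m$ must be retained differently.

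To retain it, I will use the chain rule per block. Write $H(X^{Nm})\ge$ a sum of $N$ block terms, each conditioned on the past, and apply the block bound to each block. The encoder state at the block start is arbitrary, so I need Theorem~\ref{thm:irred} for a general initial state $z$; it is stated for all $z,z'$. Conditioning on the past and the start state then gives each block output $\ge H(X^m_{\mathrm{block}}\mid\text{past})-C$. Stationarity and monotonicity of conditional entropy give $H(X^m_{\mathrm{block}}\mid\text{past})\ge mH(X_m\mid X^{m-1})$ only if the past is truncated appropriately. This is exactly the ``Kraft-sum-with-slack'' step in \cite{merhav-gki}, and it yields $\rho\ge H(X_m\mid X^{m-1})-C/m$ for each $m$. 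Taking $\sup_m$ finishes.

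\textbf{Main obstacle.} The main obstacle is this last step: getting $H(X_m\mid X^{m-1})$ rather than the weaker $H(X^m)/m$. Conditioning on the full past must be combined with the per-state Kraft bounds, and the state must be absorbed into the conditioning at the cost of only one $\log_2 J$ per block.
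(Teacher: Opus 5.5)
There is a genuine gap, and it is the step you defer to the generalized-Kraft reference. Conditioning each block on the full past cannot produce $H(X_m\mid X^{m-1})$. Write $B_i$ for the $i$-th block. For a stationary source and $i\ge2$, the chain rule expands $H(B_i\mid X^{(i-1)m})$ into $m$ terms $H(X_k\mid X^{k-1})$ with $k>m$, each \emph{at most} $H(X_m\mid X^{m-1})$, so the inequality you need runs the wrong way. Summing over blocks just reassembles $H(X^{Nm})$, i.e., the entropy-rate bound you already rejected. Nor can the past be ``truncated'' for free, because the block's output depends on it through the encoder state.

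The missing idea is to condition on the block's starting state $Z^{(i)}$ \emph{alone}. Output length is additive, $\ell[f(z_1,x^{Nm})]=\sum_{i=1}^N\ell[f(Z^{(i)},B_i)]$. For each $z$, summing Theorem~\ref{thm:irred} over the $J$ final states gives $\sum_{x^m}2^{-\ell[f(z,x^m)]}\le J\,2^{(J-1)\ell_{\max}}$. Gibbs' inequality, applied conditionally on $Z^{(i)}=z$, then gives $\E\{\ell[f(Z^{(i)},B_i)]\}\ge H(B_i\mid Z^{(i)})-\log_2J-(J-1)\ell_{\max}$. By stationarity, $H(B_i\mid Z^{(i)})\ge H(B_i)-H(Z^{(i)})\ge H(X^m)-\log_2 J$. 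Hence
\[
\frac{1}{Nm}\,\E\{\ell[f(z_1,X^{Nm})]\}\ \ge\ \frac{H(X^m)-2\log_2J-(J-1)\ell_{\max}}{m}\ \ge\ H(X_m\mid X^{m-1})-\frac{C(M,\alpha,\ell_{\max})}{m},
\]
using $H(X^m)/m\ge H(X_m\mid X^{m-1})$. Letting $N\to\infty$ gives the claim for each $m$.

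Your accounting of the constant also has to be repaired, because as written it leaves no room for this step. The final-state augmentation is unnecessary. The slack bound $\E\{\ell\}\ge H-\log_2\sum 2^{-\ell}$ is just nonnegativity of a divergence and holds for \emph{any} length function, decodable or not; the IL property was already used inside Theorem~\ref{thm:irred}. Summing over final states, or alternatively conditioning on $Z_{m+1}$ (one or the other, not both), costs exactly one $\log_2J$. So your root-started bound should read $H(X^m)-\log_2J-(J-1)\ell_{\max}$. If you charge the final state twice, as you do, and then also pay for the unknown initial state of each block, you get $3\log_2J$ rather than the $2\log_2J$ in $C(M,\alpha,\ell_{\max})$. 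The correct split is one $\log_2J$ for the final states and one for the block's initial state.
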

The right-hand side of \eqref{eq:lowerbound} cannot be smaller than the entropy
rate $H_\infty\dfn\lim_{m\to\infty}H(X_m\mid X^{m-1})$ (a supremum
dominates the limit), and decreases in $M$.
For a DMS, $H(X_m\mid X^{m-1})=H(X)$,
so the supremum is taken over
$H(X)-C(M,\alpha,\ell_{\max})/m$, which yields $H(X)$.
This is a substantially
more direct route to the converse than the classical technique of
concatenating many phrases and invoking a law-of-large-numbers
argument. 

\subsection{Design}
\label{sec:design}

Unlike F2V length coding (where Huffman's algorithm gives a provably optimal
greedy construction) and V2F length coding (where Tunstall's algorithm does the
same for memoryless sources), no analogous optimal joint design algorithm is known
for V2V length coding, where both the parsing tree and the codeword lengths are
simultaneously free. The standard practice is a two-stage heuristic:
build a Tunstall-optimal V2F tree, then apply Huffman coding to the leaf
probabilities induced by that tree. This heuristic is not known to be
jointly optimal and in general it is not \cite{bugeaud08}. The minimum
achievable redundancy for V2V length codes of a given size $M$ is an open problem.
One known negative result: for a binary memoryless source with
$0<P(1)<1/2$, no V2V code achieves exactly zero redundancy \cite{drmota-szpan}.

The best known achievability result is due to Khodak \cite{drmota-szpan,bugeaud08}:
for any memoryless source there exists a V2V code whose average redundancy
decays as $O(\bar L^{-5/3})$, where $\bar L$ is the average phrase
length.  This strictly improves on the $O(1/\bar L)$ redundancy of
the Tunstall--Huffman heuristic, achieved by using a parsing tree that
deliberately maintains non-uniform leaf probabilities so that variable
codeword lengths can contribute meaningfully.  An explicit construction
achieving this bound is given in \cite{bugeaud08}.

\subsection{Notation summary}
\label{sec:notation}

The paper accumulates a fair amount of notation as it goes; the following table
collects the symbols used across more than one section, for reference
while reading the more technical parts (Sections~\ref{sec:moments}
and~\ref{sec:markov} especially). Purely local symbols, introduced and
used within a single proof, are omitted.

\begingroup
\small
\begin{center}
\begin{tabular}{@{}lll@{}}
\toprule
Symbol & Meaning & First defined \\
\midrule
\multicolumn{3}{@{}l}{\emph{Source and code structure}} \\
$\mathcal X$, $\alpha$ & source alphabet, its cardinality & \S\ref{sec:background} \\
$\mathcal D$, $M$ & dictionary (leaf set), its size & \S\ref{sec:background} \\
$Y_i$, $L(Y_i)$, $\ell(Y_i)$ & $i$-th phrase, its length, its codeword length & \S\ref{sec:background} \\
$H(X)$ & source symbol entropy & \S\ref{sec:background}\\
$H(Y)$ & source phrase entropy & \S\ref{sec:background}\\
$\bar L,\bar\ell$ & mean phrase length, mean codeword length & \S\ref{sec:background} \\
$\rho$ & asymptotic compression rate, $\bar\ell/\bar L$ & \S\ref{sec:background} \\
$\Lambda_n,\Sigma_n,R_n$ & total codeword length, total symbols, $R_n=\Lambda_n/\Sigma_n$ & \S\ref{sec:background} \\
$\mathbf K$, $\spr(\mathbf K)$ & Kraft matrix, its spectral radius & \S\ref{sec:background} \\
\multicolumn{3}{@{}l}{\emph{Memoryless moment analysis (Section~\ref{sec:moments})}} \\
$\mu_j(t)$ & single-phrase quantity $\E\{[\ell(Y)]^je^{-tL(Y)}\}$ & \S\ref{sec:background} \\
$f(t)$ & $-\ln\mu_0(t)$, negative log of the MGF of $-L(Y)$ & \S\ref{sec:laplace} \\
$\kappa_2$, $c_{\ell L}$ & $\Var\{L(Y)\}$, $\E\{\ell(Y)L(Y)\}$ & \S\ref{sec:laplace} \\
$C$ & $O(1/n)$ bias coefficient, $\E\{R_n\}\to\rho+C/n$ & \S\ref{sec:laplace} \\
$\gamma_1$ & skewness of $R_n$ & \S\ref{sec:all-moments} \\
\multicolumn{3}{@{}l}{\emph{Markov extension (Section~\ref{sec:markov})}} \\
$X_k$, $S_i$ & underlying symbol chain, phrase-boundary state $X_{\Sigma_i}$ & \S\ref{sec:markov_model} \\
$\boldsymbol\mu_j(t)$ & matrix analogue of $\mu_j(t)$, indexed by $\mathcal X$ & \S\ref{sec:markov_transforms} \\
$\pi$ & stationary distribution of $\boldsymbol\mu_0(0)$ (row vector) & \S\ref{sec:markov_transforms} \\
$\E_\pi\{Q\}$ & $\sum_s\pi_s\E\{Q(Y_i)\mid S_{i-1}=s\}$ & \S\ref{sec:markov_transforms} \\
$\Lambda(t)$ & dominant (Perron) eigenvalue of $\boldsymbol\mu_0(t)$ & \S\ref{sec:markov_worked} \\
$\bar c$ & $\E_\pi\{L\ell\}$ (and $\bar L,\bar\ell$ reused for
$\E_\pi\{L\},\E_\pi\{\ell\}$) & \S\ref{sec:markov_bias} \\
$g_s$ & $\E\{L(Y_i)\mid S_{i-1}=s\}$ & \S\ref{sec:markov_worked} \\
$g_\ell(s)$ & $\E\{\ell(Y_i)\mid S_{i-1}=s\}$ & \S\ref{sec:markov_bias} \\
$w$, $w_\ell$ & Poisson-equation solutions for the rewards $L$, $\ell$ & \S\ref{sec:markov_bias} \\
\bottomrule
\end{tabular}
\end{center}
\endgroup

\section{Moments of the compression ratio for memoryless sources}
\label{sec:moments}

This section develops the main technical machinery of this paper in two
stages. In Sections~\ref{sec:mean-moment}--\ref{sec:all-moments}, an exact
formula for every integer moment $\E\{R_n^k\}$ is obtained.
First, the case $k=1$ is worked out in full, and 
then it is generalized for an arbitrary positive integer $k$. Second, from these exact formulas: a
refined Edgeworth approximation to the CDF of $R_n$
(Section~\ref{sec:edgeworth}) that improves on the classical CLT by using
the skewness the third moment provides; closed-form $O(1/n)$
asymptotics for the bias and variance via Laplace's method
(Section~\ref{sec:laplace}); and a large-deviations analysis of $R_n$'s
tails. A numerical validation against Monte Carlo simulation
(Section~\ref{sec:validation}) checks the exact formulas along the way.
In this section the phrase pairs $(L(Y_i),\ell(Y_i))_{i\ge1}$ are i.i.d.\ with the
same distribution as $(L(Y),\ell(Y))$ for a generic phrase $Y$, as is appropriate for a
DMS. 

\subsection{The mean compression ratio}
\label{sec:mean-moment}

We begin with the first moment -- the expected compression ratio $\E\{R_n\}$ -- as
the quantity of primary practical interest and the one that motivates the
general approach.

\begin{theorem}
\label{thm:mean}
For a given DMS and a given V2V length code with a dictionary $\mathcal{D}$
and code length function $\ell(\cdot)$, the expected compression ratio of $n$
phrases is given by
\begin{equation}
\E\{R_n\}= n\cdot\int_0^\infty \mu_1(t)\cdot[\mu_0(t)]^{n-1}\dif t.
\label{eq:mean}
\end{equation}
\end{theorem}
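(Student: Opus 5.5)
The proof rests on the Laplace-transform identity $1/s=\int_0^\infty e^{-ts}\,\dif t$, valid for every $s>0$. Since $L(y)\ge1$ for every $y\in\mathcal{D}$, we have $\Sigma_n\ge n\ge1>0$ almost surely. We can therefore write
\begin{equation*}
R_n=\Lambda_n\int_0^\infty e^{-t\Sigma_n}\,\dif t=\sum_{i=1}^n\int_0^\infty \ell(Y_i)\,e^{-t\sum_{j=1}^n L(Y_j)}\,\dif t .
\end{equation*}
We then take expectations and move $\E$ inside the integral. All integrands are nonnegative, so Tonelli's theorem justifies the interchange without any integrability check. This gives
\begin{equation*}
\E\{R_n\}=\sum_{i=1}^n\int_0^\infty \E\Bigl\{\ell(Y_i)e^{-tL(Y_i)}\prod_{j\ne i}e^{-tL(Y_j)}\Bigr\}\dif t .
\end{equation*}

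Next we use the i.i.d.\ structure of the pairs $(L(Y_i),\ell(Y_i))$. For fixed $t$, the expectation of the product factorizes into $\E\{\ell(Y_i)e^{-tL(Y_i)}\}\prod_{j\ne i}\E\{e^{-tL(Y_j)}\}$, which equals $\mu_1(t)[\mu_0(t)]^{n-1}$ by the definition (\ref{eq:muj}). This value does not depend on $i$, so the $n$ summands are identical. Summing them yields the factor $n$ and hence (\ref{eq:mean}).

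Finiteness of the resulting integral deserves a remark, though it is not logically needed. Tonelli already gives the equality in $[0,\infty]$, and the left side is finite because $R_n\le \ell_{\max}$. Directly, $\mu_1(t)\le \ell_{\max}e^{-t}$ and $\mu_0(t)\le e^{-t}$, so the integrand is at most $\ell_{\max}e^{-nt}$.

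No step is a genuine obstacle. The only point requiring care is the justification of exchanging expectation and integral, and nonnegativity handles it. The key conceptual move is the representation of $1/\Sigma_n$ as an integral: it converts the ratio of dependent sums into a product of independent single-phrase factors.
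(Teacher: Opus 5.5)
Your proposal is correct and follows the paper's proof essentially step for step: the Laplace representation of $1/\Sigma_n$, exchanging expectation and integral, factorizing each of the $n$ terms by independence into $\mu_1(t)[\mu_0(t)]^{n-1}$, and collecting the identical summands. The only difference is how the interchange is justified: you use Tonelli's theorem via nonnegativity, whereas the paper uses finite-sum linearity over the finitely many values of $(\Lambda_n,\Sigma_n)$; your justification is equally valid and would also cover countable dictionaries.
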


\begin{proof}
We use the integral representation $1/s$ as the Laplace 
transform of the unit step function, that is, the identity
$\frac{1}{s} = \int_0^\infty e^{-st}\dif t$,
valid for any $s>0$. Since $\Sigma_n = \sum_{i=1}^n L(Y_i)$ is a positive integer
almost surely, we have
\begin{equation}
\label{eq:Rn_integrand}
R_n = \frac{\Lambda_n}{\Sigma_n} = \Lambda_n\cdot \int_0^\infty e^{-t\Sigma_n}\dif t
\quad \text{a.s.}
\end{equation}
Taking expectations (justified here by finite-sum linearity, since
$\Lambda_n$ and $\Sigma_n$ take only finitely many values for the
finite dictionaries considered throughout this paper, so the outer
expectation is itself a finite sum that commutes with the integral
term by term), we obtain
\begin{equation}
\label{eq:ERn_first}
\E\{R_n\} = \int_0^\infty \E\{\Lambda_n e^{-t\Sigma_n}\}\dif t.
\end{equation}
Now expand 
\begin{equation}
\Lambda_n e^{-t\Sigma_n} = \bigl[\sum_{i=1}^n \ell(Y_i)\bigr]\cdot
\exp\bigl[-t\sum_{i=1}^n L(Y_i)\bigr]
= \sum_{i=1}^n \ell(Y_i) \prod_{j=1}^n e^{-tL(Y_j)}. 
\end{equation}
Taking the expectation of
the $i$-th term and using independence of the phrase pairs:
\begin{equation}
\label{eq:independence}
\E\Bigl\{\ell(Y_i)\prod_{j=1}^n e^{-tL(Y_j)}\Bigr\}
= \E\{\ell(Y_i) e^{-tL(Y_i)}\}\prod_{j\ne i}\E\{e^{-tL(Y_j)}\}
= \mu_1(t)[\mu_0(t)]^{n-1}.
\end{equation}
Since all $n$ terms contribute equally (by the i.i.d.\ assumption),
$\E\{\Lambda_n e^{-t\Sigma_n}\}=n\cdot\mu_1(t)[\mu_0(t)]^{n-1}$. Substituting
this into \eqref{eq:ERn_first} and integrating over $t$ from $0$ to
$\infty$,
\begin{equation}
\E\{R_n\}=n\cdot\int_0^\infty \mu_1(t)\cdot[\mu_0(t)]^{n-1}\dif t,
\end{equation}
which is \eqref{eq:mean}.
\end{proof}

\subsection{Extension to general moments}
\label{sec:all-moments}

The proof of Theorem~\ref{thm:mean} used two ingredients: the Laplace
transform representation of $1/s$, and independence across phrase pairs. For
general $k\ge1$, the same two ingredients apply, but expanding
$\Lambda_n^k$ now produces cross-terms between different phrases, and
tracing of these requires some combinatorial bookkeeping. That
bookkeeping is organized by \emph{set partitions} of $\{1,\dots,k\}$,
defined precisely in the proof below, with a worked example
for $k=2$ and $k=3$ included along the way.

\begin{theorem}
\label{thm:general}
For a given DMS and a given V2V length code with a dictionary $\mathcal{D}$
and code length function $\ell(\cdot)$, the $k$-th moment of the compression ratio of $n$
phrases is given by
\begin{equation}
\E\{R_n^k\} = \frac{1}{(k-1)!}\int_0^\infty t^{k-1}
\sum_{\tau\in P_k} \frac{n!}{(n-|\tau|)!}[\mu_0(t)]^{n-|\tau|}
\prod_{B\in\tau}\mu_{|B|}(t)\dif t,
\label{eq:moments}
\end{equation}
where $P_k$ denotes the collection of all set partitions of $\{1,\dots,k\}$;
$|\tau|$ denotes the number of subsets associated with a partition $\tau\in P_k$;
and $|B|$ denotes the cardinality of a subset $B$.
\end{theorem}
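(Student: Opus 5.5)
The plan mirrors the proof of Theorem~\ref{thm:mean}, with two changes. The Laplace representation of $1/s$ is replaced by that of $1/s^k$, and the single sum over phrases is replaced by a $k$-fold sum organized by set partitions. First I use the Gamma-function identity
\begin{equation*}
\frac{1}{s^k}=\frac{1}{(k-1)!}\int_0^\infty t^{k-1}e^{-st}\dif t,\qquad s>0,
\end{equation*}
with $s=\Sigma_n\ge 1$, so that almost surely
\begin{equation*}
R_n^k=\frac{1}{(k-1)!}\int_0^\infty t^{k-1}\Lambda_n^k e^{-t\Sigma_n}\dif t .
\end{equation*}
Since $(\Lambda_n,\Sigma_n)$ takes finitely many values, the expectation is a finite sum and may be exchanged with the integral. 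This gives
\begin{equation*}
\E\{R_n^k\}=\frac{1}{(k-1)!}\int_0^\infty t^{k-1}\,\E\{\Lambda_n^k e^{-t\Sigma_n}\}\dif t,
\end{equation*}
so everything reduces to computing $\E\{\Lambda_n^k e^{-t\Sigma_n}\}$.

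Next I expand
\begin{equation*}
\Lambda_n^k=\sum_{(i_1,\dots,i_k)\in\{1,\dots,n\}^k}\prod_{r=1}^k\ell(Y_{i_r}).
\end{equation*}
Each index tuple $(i_1,\dots,i_k)$ induces a set partition $\tau$ of $\{1,\dots,k\}$: positions $r$ and $r'$ lie in the same block iff $i_r=i_{r'}$. Conversely, a tuple with kernel $\tau$ corresponds bijectively to an injective assignment of distinct phrase indices to the blocks of $\tau$. There are exactly $n!/(n-|\tau|)!$ such assignments, and $\tau$ with $|\tau|>n$ contributes zero, consistent with the falling factorial.

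For a fixed tuple with kernel $\tau$, suppose block $B$ is assigned the index $j_B$. The random quantity is then
\begin{equation*}
\prod_{B\in\tau}\ell(Y_{j_B})^{|B|}\prod_{j=1}^n e^{-tL(Y_j)} .
\end{equation*}
By independence of the pairs $(L(Y_j),\ell(Y_j))$, its expectation factorizes. Each occupied index contributes $\E\{\ell(Y)^{|B|}e^{-tL(Y)}\}=\mu_{|B|}(t)$, and each of the $n-|\tau|$ unoccupied indices contributes $\mu_0(t)$. Since this value depends only on $\tau$, I sum over tuples grouped by kernel to get
\begin{equation*}
\E\{\Lambda_n^ke^{-t\Sigma_n}\}=\sum_{\tau\in P_k}\frac{n!}{(n-|\tau|)!}[\mu_0(t)]^{n-|\tau|}\prod_{B\in\tau}\mu_{|B|}(t).
\end{equation*}
Substituting into the integral yields \eqref{eq:moments}.

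The only step needing care is the combinatorial bookkeeping: checking that the map from tuples to (partition, injective block labeling) is a bijection, so that no tuple is double-counted or missed. I would verify it explicitly for $k=2$, where the partitions $\{\{1,2\}\}$ and $\{\{1\},\{2\}\}$ give $n\mu_2\mu_0^{n-1}+n(n-1)\mu_1^2\mu_0^{n-2}$. I would then do the same for $k=3$, where the five partitions give coefficients $n$, $3n(n-1)$, and $n(n-1)(n-2)$. Both checks recover the multinomial expansion.
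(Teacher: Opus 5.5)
Your proposal is correct and follows the paper's proof essentially step for step. The steps are the same: the Gamma-function representation of $1/\Sigma_n^k$, swapping expectation and integral by finiteness, grouping the $n^k$ index tuples by induced set partition with the falling-factorial count, and factorizing by independence into $\prod_{B\in\tau}\mu_{|B|}(t)\,[\mu_0(t)]^{n-|\tau|}$ (your note that partitions with $|\tau|>n$ contribute zero is a small point the paper leaves implicit).
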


\begin{proof}
The integral representation of the function $\frac{1}{s^k}$ as a Laplace
transform is given by
\begin{equation}
\label{eq:Gamma}
\frac{1}{s^k} = \frac{1}{(k-1)!}\int_0^\infty t^{k-1}e^{-st}\dif t,~~~~s>0.
\end{equation}
Applying this with $s=\Sigma_n$ (which is a positive integer a.s.):
\begin{equation}
\label{eq:Rnk_int}
R_n^k = \frac{\Lambda_n^k}{\Sigma_n^k}
= \frac{\Lambda_n^k}{(k-1)!}\cdot\int_0^\infty t^{k-1}e^{-t\Sigma_n}\dif t.
\end{equation}
Taking expectations (again justified by finite-sum linearity, as above), we have
\begin{equation}
\label{eq:ERnk_int}
\E\{R_n^k\} = \frac{1}{(k-1)!}\int_0^\infty t^{k-1}\,\E\{\Lambda_n^k e^{-t\Sigma_n}\}\,\dif t.
\end{equation}
It remains to evaluate $\E\{\Lambda_n^k e^{-t\Sigma_n}\}$.
Now,
\begin{equation}
\label{eq:Lambda_expand}
\Lambda_n^k = \Bigl[\sum_{i=1}^n \ell(Y_i)\Bigr]^k
= \sum_{(i_1,\dots,i_k)\in\{1,\dots,n\}^k} \prod_{j=1}^k\ell(Y_{i_j}).
\end{equation}
Each index tuple $(i_1,\dots,i_k)$ induces a set partition $\tau$
of the set $\{1,\dots,k\}$ to a number of subsets. The slots $a$ and $b$ in
$\{1,\dots,k\}$ belong to the same
subset pertaining to partition $\tau$ if and only if $i_a=i_b$. 
We denote by $P_k$ for the collection of all set
partitions $\{\tau\}$ of $\{1,\dots,k\}$ (a finite set; $|P_1|=1$, $|P_2|=2$,
$|P_3|=5$), $|\tau|$ for the number of blocks of $\tau\in P_k$, and $|B|$
for the size of a block $B\in\tau$.\footnote{For example,
if $k=3$ and the tuple $(i_1,i_2,i_3)=(5,7,5)$, slots
$a=1$ and $b=3$ share the phrase index $5$ while slot $c=2$ has the different
index $7$, so the induced partition is $\tau=\{\{1,3\},\{2\}\}$, with
$|\tau|=2$ subsets of sizes $2$ and $1$. The other tuples inducing this
same $\tau$ are exactly those of the form $(j,j',j)$ for any two
\emph{distinct} phrase indices $j\ne j'$ in $\{1,\dots,n\}$.}
Because the phrase pairs $(L(Y_j),\ell(Y_j))_{j\ge1}$ are i.i.d., the
value of $\E\{\ell(Y_{i_1})\cdots\ell(Y_{i_k})e^{-t\Sigma_n}\}$ depends
on the tuple $(i_1,\dots,i_k)$ only through its induced partition
$\tau$, not on the specific phrase indices that realize it. This is what
lets us group the $n^k$ terms of \eqref{eq:Lambda_expand} by partition
type, rather than evaluating each term separately.
This index-partitioning technique -- grouping terms of a power of a
sum by which indices coincide, and exploiting exchangeability so that
each group's contribution depends only on the partition it induces --
is classical, in the spirit of the partition-lattice approach to
moments and cumulants of sums of random variables
\cite{speed-partitions}. What is needed beyond that classical identity
is that the exponential weight $e^{-t\Sigma_n}$ touches all $n$
phrases, not only the $k$ appearing in $\Lambda_n^k$: the $n-|\tau|$
phrases outside the partition's subsets still each contribute a factor
(through $\mu_0(t)$, below), and it is this joint accounting -- the
falling-factorial count together with the untouched phrases'
contribution -- that extends the classical moment-of-a-sum computation
to the joint quantity $\E\{\Lambda_n^k e^{-t\Sigma_n}\}$ actually needed
here.

Fix $\tau\in P_k$ with blocks $B_1,\dots,B_{|\tau|}$. A tuple
$(i_1,\dots,i_k)$ induces exactly this $\tau$ if and only if it assigns
the \emph{same} phrase index to every slot within a given subset, and
\emph{different} phrase indices to slots in different subsets.
So realizing $\tau$ amounts to choosing an
ordered assignment of $|\tau|$ \emph{distinct} phrase indices
$j_1,\dots,j_{|\tau|}\in\{1,\dots,n\}$, one per subset -- an injective
function from the $|\tau|$ blocks to the $n$ phrases. Assigning these
indices one block at a time, $B_1$ may receive any of the $n$ phrase
indices, $B_2$ any of the remaining $n-1$ (it must differ from $B_1$'s),
$B_3$ any of the remaining $n-2$, and so on, until $B_{|\tau|}$, which
has $n-|\tau|+1$ choices left; the number of
such assignments is the falling factorial
$n(n-1)\cdots(n-|\tau|+1)=\frac{n!}{(n-|\tau|)!}$.

With blocks $B_1,\dots,B_{|\tau|}$ assigned distinct phrase indices
$j_1,\dots,j_{|\tau|}$ respectively, the product
$\ell(Y_{i_1})\cdots\ell(Y_{i_k})$ collapses to
$\prod_{r=1}^{|\tau|}[\ell(Y_{j_r})]^{|B_r|}$, since subset $B_r$
contributes $|B_r|$ copies of the same factor $\ell(Y_{j_r})$. Splitting
$e^{-t\Sigma_n}=\prod_{j=1}^n e^{-tL(Y_j)}$ into the $|\tau|$ marked
phrases $j_1,\dots,j_{|\tau|}$ and the remaining $n-|\tau|$ phrases, and
using independence across phrases:
\begin{align}
\label{eq:factorize}
\E\Bigl\{\prod_{a=1}^k \ell(Y_{i_a})\cdot\prod_{j=1}^n e^{-tL(Y_j)}\Bigr\}
&= \prod_{r=1}^{|\tau|}\E\{[\ell(Y_{j_r})]^{|B_r|}e^{-tL(Y_{j_r})}\}\cdot
  \prod_{j\notin\{j_1,\dots,j_{|\tau|}\}}\E\{e^{-tL(Y_j)}\} \nonumber\\
&= \prod_{B\in\tau}\mu_{|B|}(t)\cdot[\mu_0(t)]^{n-|\tau|},
\end{align}
using the i.i.d.\ property and the definitions of $\mu_j(t)$ and
$\mu_0(t)$. This value depends only on $\tau$ (through its block sizes),
not on the specific phrases $j_1,\dots,j_{|\tau|}$ chosen to realize it
-- consistent with the partition-dependence noted above.

Every $\tau\in P_k$ contributes the same per-tuple expectation, as just
shown, so summing over all $\frac{n!}{(n-|\tau|)!}$ tuples that induce
it (the falling-factorial count derived above) and then over all
$\tau\in P_k$ gives
\begin{equation}
\label{eq:ELambdak}
\E\{\Lambda_n^ke^{-t\Sigma_n}\}
= \sum_{\tau\in P_k}\frac{n!}{(n-|\tau|)!}[\mu_0(t)]^{n-|\tau|}\prod_{B\in\tau}\mu_{|B|}(t).
\end{equation}
Substituting into \eqref{eq:ERnk_int} gives \eqref{eq:moments}.
\end{proof}

For $k=2$ and $k=3$,
the integrals are explicitly:
\begin{equation}
\label{eq:secondmoment}
\E\{R_n^2\} = \int_0^\infty t\bigl(n\mu_2(t)[\mu_0(t)]^{n-1}+n(n-1)[\mu_1(t)]^2[\mu_0(t)]^{n-2}\bigr)\dif t,
\end{equation}
\begin{align}
\label{eq:thirdmoment}
\E\{R_n^3\} = \frac{1}{2}\int_0^\infty t^2\bigl(&n\mu_3(t)[\mu_0(t)]^{n-1}
+3n(n-1)\mu_2(t)\mu_1(t)[\mu_0(t)]^{n-2}\\
&+n(n-1)(n-2)[\mu_1(t)]^3[\mu_0(t)]^{n-3}\bigr)\dif t.
\end{align}
From these three moments one obtains the mean, variance, and skewness
of $R_n$ via $\Var\{R_n\}=\E\{R_n^2\}-(\E\{R_n\})^2$ and
$\gamma_1=(\E\{R_n^3\}-3\E\{R_n\}\E\{R_n^2\}+2(\E\{R_n\})^3)/(\Var\{R_n\})^{3/2}$.

\subsection{Validation}

\label{sec:validation}
We now validate \eqref{eq:moments} on a simple V2V length code:
a binary memoryless source
with $P(0)=1-P(1)=p=0.8$, parsed by the Tunstall tree with dictionary
$\{00,01,1\}$, and coded with the Huffman assignment
$(\ell(00),\ell(01),\ell(1))=(1,2,2)$. The same source, tree, and
codeword assignment are used again for the bias-decomposition example
in Section~\ref{sec:comparison}. This is
a complete prefix code:
$2^{-1}+2^{-2}+2^{-2}=1$, so Kraft's inequality holds with equality, and
every codeword length is a positive integer. The phrase
probabilities are $Q(00)=p^2=0.64$, $Q(01)=p(1-p)=0.16$, $Q(1)=1-p=0.2$,
giving $\bar L=1.8$, $\bar\ell=1.36$, and $\rho=\bar\ell/\bar L=0.7556$.

For this model, $\mu_j(t)=\E\{[\ell(Y)]^je^{-tL(Y)}\}$ is the explicit
three-term sum
\begin{equation}
\mu_j(t) = 0.64\cdot1^j e^{-2t} + 0.16\cdot2^j e^{-2t} + 0.2\cdot2^j e^{-t},
\end{equation}
so, e.g., $\mu_0(t)=0.8e^{-2t}+0.2e^{-t}$ and $\mu_1(t)=0.96e^{-2t}+0.4e^{-t}$.
The three integrals \eqref{eq:mean}, \eqref{eq:secondmoment}, \eqref{eq:thirdmoment} are
evaluated by standard numerical quadrature (Gaussian quadrature
after the substitution $t=u/n$ to handle the concentration near $t=0$
for larger $n$).

At $n=50$ phrases, the exact formulas give:
\begin{equation}
\label{eq:exact50}
\E\{R_{50}\}=0.7571,\quad\Var\{R_{50}\}=0.003230,\quad
\gamma_1=0.2878.
\end{equation}
For validation, we performed a Monte Carlo simulation of $1{,}000{,}000$
independent realizations of $50$ phrase pairs $(L(Y_i),\ell(Y_i))$, drawn
according to the three phrase probabilities above, computing the
realized ratio $R_{50}=\Lambda_{50}/\Sigma_{50}$ for each. The empirical
estimates are:
\begin{equation}
\label{eq:mc50}
\widehat{\E\{R_{50}\}}=0.7571,\quad
\widehat{\Var\{R_{50}\}}=0.003226,\quad
\widehat{\gamma_1}=0.2875.
\end{equation}
Agreement is to three or four significant figures throughout -- the
skewness $\gamma_1$ in particular is the first quantity that
\eqref{eq:thirdmoment} provides beyond what a first-order
approximation would give -- and the exact formulas
require no simulation effort at all.

\subsection{The Edgeworth approximation}

\label{sec:edgeworth}
The three exact moment formulas \eqref{eq:mean}--\eqref{eq:thirdmoment} can be used to
construct a sharper approximation to the CDF of $R_n$ than the ordinary
CLT (Gaussian) approximation.
By the bivariate CLT, the pair $(\Lambda_n-n\bar\ell,\Sigma_n-n\bar L)/\sqrt{n}$ converges
in distribution to a bivariate normal as $n\to\infty$.
By the delta method \cite[Theorem~3.1]{vandervaart}, any function
that is continuously differentiable at the limiting mean again yields
a normal limit, with asymptotic variance determined by the gradient of
the function and the full limiting covariance matrix.
Applied to the map $(\lambda,m)\mapsto\lambda/m$, which is
continuously differentiable at $(\bar\ell,\bar L)$ since $\bar L>0$, this
implies that
$Z_n\dfn(R_n-\E\{R_n\})/\sqrt{\Var\{R_n\}}$
converges in distribution to a standard normal $\mathcal{N}(0,1)$.
The CLT approximation to the CDF $F(x)\dfn\Pr\{R_n\le x\}$ is therefore
$\Phi_{\rm nor}(z)$, where $z=(x-\E\{R_n\})/\sqrt{\Var\{R_n\}}$ and $\Phi_{\rm nor}$ is the
standard normal CDF. But this approximation ignores the skewness of the
distribution, which is $O(1/\sqrt{n})$ and non-negligible at moderate $n$.

The one-term Edgeworth expansion corrects for the skewness. It approximates
$F$ by 
\begin{equation}
F_{\mbox{\tiny Edgeworth}}(x)\dfn \Phi_{\rm
nor}(z)-\varphi(z)\,\frac{\gamma_1}{6}(z^2-1),
\label{eq:edgeworth}
\end{equation}
where $z=(x-\E\{R_n\})/\sqrt{\Var\{R_n\}}$, $\varphi=\Phi'_{\rm nor}$ is the
standard normal density, and $\gamma_1$ is the skewness of $R_n$.
The approximation \eqref{eq:edgeworth} is a standard result in the
theory of Edgeworth expansions for sums of i.i.d.\ random variables
\cite[Ch.~XVI]{feller}; its validity for the ratio $R_n=\Lambda_n/\Sigma_n$
follows from the delta method applied to the bivariate CLT for
$(\Lambda_n,\Sigma_n)$, under standard moment conditions on the phrase pair
$(L(Y),\ell(Y))$.
The key point is that all three parameters entering \eqref{eq:edgeworth}
-- $\E\{R_n\}$, $\Var\{R_n\}$, and $\gamma_1$ -- are available
from \eqref{eq:mean}--\eqref{eq:thirdmoment} for every finite $n$. 

Continuing the same numerical example as in Subsection \ref{sec:validation},
Table \ref{tab:edgeworth} shows the absolute approximation error
$|F(x)-F_{\mbox{\tiny Edgeworth}}(x)|$ for $n=50$, at several standardized
values $z$, comparing the CLT approximation against
\eqref{eq:edgeworth} with the exact $\gamma_1=0.2878$:

\begin{table}[h]
\centering
\begin{tabular}{cccc}
\hline
$z$ & $x$ & CLT error & Edgeworth error\\
\hline
$-2.0$ & $0.6434$ & $0.00773$ & $0.00004$\\
$-0.5$ & $0.7287$ & $0.01881$ & $0.00614$\\
$\phantom{-}0.0$ & $0.7571$ & $0.01972$ & $0.00058$\\
$\phantom{-}0.5$ & $0.7855$ & $0.01111$ & $0.00155$\\
$\phantom{-}2.0$ & $0.8708$ & $0.00574$ & $0.00203$\\
\hline
\end{tabular}
\caption{Absolute CDF approximation error at $n=50$, against empirical
CDF from $2{,}000{,}000$ Monte Carlo trials.}
\label{tab:edgeworth}
\end{table}

The Edgeworth correction reduces the approximation error at every
tabulated point, by a factor ranging from about $3$ to nearly $200$
depending on $z$; averaged over the range shown, the mean absolute
error drops by a factor of about $3.4$ and the maximum error by a
factor of about $2.7$. This is a more modest improvement than a smooth,
continuously-supported distribution would show (compare
Remark~\ref{rem:lattice} below), since $L(Y)$ and $\ell(Y)$ each take
only two values here, but it is a genuine improvement for an
actual code.

Figure~\ref{fig:edgeworth} shows this comparison as a continuous
function of $z$ rather than at the five tabulated points, against a
$20{,}000{,}000$-trial Monte Carlo baseline. The CLT error traces out a
broad envelope peaking near $z=0$, while the Edgeworth error stays
consistently lower across the entire range shown. Both curves show
some oscillation rather than a perfectly smooth profile; this is a
genuine finite-alphabet effect discussed in
Remark~\ref{rem:lattice} below.

\begin{figure}[h]
\centering
\includegraphics[width=0.75\textwidth]{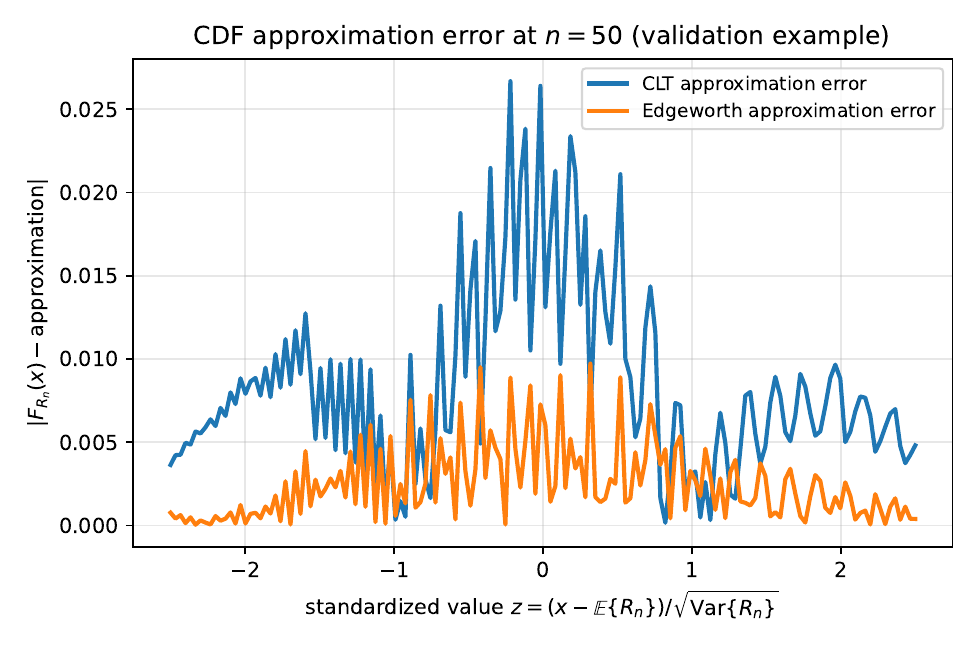}
\caption{Absolute CDF approximation error, CLT vs.\ Edgeworth
\eqref{eq:edgeworth}, as a function of the standardized value $z$, for
the validation example of Section~\ref{sec:validation} at $n=50$.
Ground truth is a $20{,}000{,}000$-trial Monte Carlo simulation.}
\label{fig:edgeworth}
\end{figure}

\begin{remark}[A finite-alphabet effect]
\label{rem:lattice}
Because $L(Y)\in\{1,2\}$ and $\ell(Y)\in\{1,2\}$ each take only two
values, $\Lambda_n$ and $\Sigma_n$ are both sums of small integers, so
$R_n=\Lambda_n/\Sigma_n$ takes values in a finite, coarsely-spaced set of
rationals for any fixed $n$ -- its exact CDF is a step function with
comparatively few, comparatively large jumps. Neither the CLT nor the
Edgeworth approximation is built to track a step function's exact
jumps (both are continuous-$x$ approximations, designed for the
lattice spacing to vanish as $n\to\infty$), so both error curves in
Figure~\ref{fig:edgeworth} inherit some oscillation from this
discreteness, on top of their genuine approximation error. This is a
real feature of finite, small-dictionary V2V codes, not an artifact:
a synthetic phrase-length law with a smooth, unbounded, or
finely-spaced support -- as used in an earlier version of this
example -- would mask it entirely. The Edgeworth correction still
reduces the error at every point tested, as Table~\ref{tab:edgeworth}
shows; it is simply a less clean improvement than the idealized,
non-lattice case.
\end{remark}

\subsection{Asymptotic Approximation via the Laplace method of integration}

\label{sec:laplace}
For large $n$, the integrands in \eqref{eq:moments} are sharply concentrated
near $t=0$ and can be evaluated asymptotically by the Laplace method of
integration. The change of integration variabe
$t=u/n$ transforms the integrals into a form where the
concentration is explicit and a systematic expansion in $1/n$ can be
carried out.
There are two versions of Laplace's method for an integral
$\int_a^b g(t)e^{-nf(t)}\dif t$, depending on where $f$ attains its minimum
over the domain of integration. If the minimum occurs at an
\emph{interior} point $t_0\in(a,b)$ with $f'(t_0)=0$, the local behavior of $f$
near $t_0$ is quadratic, and so, $e^{-nf(t)}$ is locally approximated by a
Gaussian, $e^{-nf(t_0)}\cdot\exp\left\{-\frac{nf''(t_0)}{2}(t-t_0)^2\right\}$,
giving the familiar $\sqrt{\frac{2\pi}{nf''(t_0)}}$ prefactor. If
instead the minimum occurs at a \emph{boundary} point of the domain ($a$ or
$b$) with $f'(t_0)\ne0$, the local behavior of $f$ is linear rather than
quadratic, and $e^{-nf(t)}$ is locally approximated at the vicinity of $t_0$ by a plain
exponential, $e^{-nf(t_0)}e^{-nf'(t_0)(t-t_0)}$, rather than a
Gaussian; its moments against a power series in $(t-t_0)$ are then
elementary Gamma-function integrals rather than Gaussian moments (see,
e.g., \cite[Sec.~4.3]{debruijn} or \cite[p.~48]{merhav-weinberger}).
This second, boundary case is the one relevant here: as shown below,
$f(t)\dfn-\ln\mu_0(t)$ attains its minimum over $t\in[0,\infty)$ at the
boundary point $t_0=0$ with $f'(0)=\bar L\ne0$, so $[\mu_0(t)]^{n}=e^{-nf(t)}$
is approximated near $t=0$ by the exponential $e^{-n\bar Lt}$.
The derivation below carries this out to one further
order in $1/n$ than the leading term alone provides, since the bias
coefficient $C$ requires the $O(1/n)$ correction.

Let $f(t)=-\ln\mu_0(t)$ denote the negative cumulant generating function (CGF) of
$-L(Y)$. Since $L(Y)\ge1$ a.s., $f(0)=0$ and $f'(0)=\E\{L(Y)\}=\bar L>0$.
Furthermore, $f$ is strictly concave as $f''(t)<0$ for all $t\ge0$.
In particular,
$f(t)>0$ for all $t>0$, so
$[\mu_0(t)]^{n-1}=e^{-(n-1)f(t)}\to0$
exponentially fast in $n$ for every fixed $t>0$. The integrand of
\eqref{eq:mean}, which is proportional to $\mu_1(t)[\mu_0(t)]^{n-1}$, therefore
concentrates near $t_0=0$ as $n\to\infty$. Substituting $t=u/n$ gives
$[\mu_0(u/n)]^{n-1}=e^{-(n-1)f(u/n)}\approx e^{-\bar L u}$ (to leading
order for large $n$), and after the substitution the integral becomes
$\int_0^\infty\mu_1(u/n)e^{-(n-1)f(u/n)}\dif u$, which is $O(1)$.

\paragraph{The bias term.}
Write $\bar L=\E\{L(Y)\}$, $\kappa_2=\Var\{L(Y)\}$, $\bar\ell=\E\{\ell(Y)\}$,
$c_{\ell L}=\E\{\ell(Y)L(Y)\}$. The $O(1/n)$ expansion of $\E\{R_n\}$
follows from a general fact about this boundary case of the Laplace
method, stated here and proved in Appendix~A (see also
\cite[Sec.~4.3]{debruijn} and \cite[p.~48]{merhav-weinberger} for the
general theory).

\begin{lemma}[Boundary Laplace expansion]
\label{lem:boundary_laplace}
Let $f:[0,\infty)\to[0,\infty)$ satisfy $f(0)=0$, be twice
continuously differentiable at $t=0$ with $f'(0)=a$ for some $a>0$
and $f''(0)=-b$, and let $h$ be continuously differentiable at $t=0$
with $h(0)=h_0$, $h'(0)=h_1$ (both suitably regular for the expansion
below to hold, as verified for $f=-\ln\mu_0$ above). Then, for any
integers $k\ge1$ and $m\ge1$, as $n\to\infty$,
\begin{eqnarray}
\label{eq:boundary_laplace}
& &\frac{n!}{(n-m)!}\int_0^\infty t^{k-1}h(t)e^{-(n-m)f(t)}\dif t\nonumber\\
&=&n^{m-k}\biggl[\frac{(k-1)!h_0}{a^k}+
\frac{(k-1)!}{2na^{k+2}}\Bigl(a^2h_0m(2k-m+1)+2ah_1k+bh_0k(k+1)\Bigr)
+O\left(\frac{1}{n^{2}}\right)\biggr]. \notag
\end{eqnarray}
\end{lemma}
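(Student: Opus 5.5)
The plan is to substitute the local expansions of $f$ and $h$ near $t=0$, integrate term by term against the exponential $e^{-Nat}$, and then convert the resulting powers of $N\dfn n-m$ back into powers of $n$. I will write $I_N\dfn\int_0^\infty t^{k-1}h(t)e^{-Nf(t)}\dif t$, so that the quantity in the lemma is $\frac{n!}{(n-m)!}I_N$. The argument has three steps: (i) localize $I_N$ to a neighbourhood $[0,\delta]$ of the boundary minimum $t_0=0$; (ii) expand the integrand there and integrate term by term, which gives an expansion of $I_N$ in powers of $1/N$; (iii) multiply by the falling factorial and re-expand in $1/n$.

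\emph{Step (i), localization.} Since $f'(0)=a>0$ and $f$ is $C^2$ near $0$, there is a $\delta>0$ with $f(t)\ge at/2$ on $[0,\delta]$. For the tail I will use that $\inf_{t\ge\delta}f(t)=c_\delta>0$. For $f=-\ln\mu_0$ this holds because $f$ is increasing: $\mu_0$ is decreasing, as $L\ge1$. I also need $h$ to grow at most polynomially, which is part of the ``suitably regular'' hypothesis and holds for $h=\mu_j$, since these are bounded. Under these conditions the contribution of $[\delta,\infty)$ is at most $\int_\delta^\infty t^{k-1}|h(t)|e^{-(N-1)c_\delta}e^{-f(t)}\dif t=O(e^{-c_\delta N})$. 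This is negligible against every power of $1/n$.

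\emph{Step (ii), local expansion.} On $[0,\delta]$ I write $f(t)=at-\frac b2t^2+r(t)$ and $h(t)=h_0+h_1t+s(t)$. I then expand
\[
e^{-Nf(t)}=e^{-Nat}\bigl(1+\tfrac{Nb}{2}t^2+E_N(t)\bigr).
\]
This is the main obstacle. To get an error of size $O(1/n^2)$, rather than merely $o(1/n)$, I need $r(t)=O(t^3)$ and $s(t)=O(t^2)$. That is, I will assume $f\in C^3$ and $h\in C^2$ near $0$, which is true for $-\ln\mu_0$ and $\mu_j$ since they are analytic. I will bound $|E_N(t)|\lesssim (Nt^3+N^2t^4)e^{\epsilon Nt}$ with $\epsilon<a/2$, using $|e^x-1-x|\le \frac{x^2}{2}e^{|x|}$. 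Each such remainder term integrates against $t^{k-1}e^{-Nat}$ to $O(N^{-k-2})$, because every factor of $t$ is worth $1/N$ under the scaling $t=u/N$. Term-by-term integration with $\int_0^\infty t^{j}e^{-Nat}\dif t=j!/(Na)^{j+1}$ then gives
\[
I_N=\frac{(k-1)!}{(Na)^k}\Bigl[h_0+\frac{h_1k}{Na}+\frac{h_0bk(k+1)}{2Na^2}+O(N^{-2})\Bigr].
\]
Extending the local integrals from $[0,\delta]$ back to $[0,\infty)$ costs only another exponentially small error.

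\emph{Step (iii), conversion to $n$.} I use two expansions:
\[
N^{-k}=n^{-k}\Bigl(1+\frac{km}{n}+O(n^{-2})\Bigr),\qquad
\frac{n!}{(n-m)!}=n^m\Bigl(1-\frac{m(m-1)}{2n}+O(n^{-2})\Bigr).
\]
Their product is $n^{m-k}\bigl(1+\frac{m(2k-m+1)}{2n}+O(n^{-2})\bigr)$. In the $1/N$ corrections from step (ii), replacing $1/N$ by $1/n$ changes them only at order $n^{-2}$. Collecting the $1/n$ terms gives $\frac{(k-1)!}{2na^{k+2}}\bigl(a^2h_0m(2k-m+1)+2ah_1k+bh_0k(k+1)\bigr)$, which is exactly the claimed bracket. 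The only genuinely delicate point is the uniform remainder control in step (ii). Everything else is bookkeeping with Gamma integrals.
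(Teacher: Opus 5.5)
Your proof is correct and follows essentially the same route as the paper's Appendix~A. Both Taylor-expand $f$ and $h$ at the boundary point, integrate term by term with Gamma integrals, and multiply by the expansion of $n!/(n-m)!$. The one bookkeeping difference is that you scale by $N=n-m$, so the $ma$ contribution enters through $N^{-k}=n^{-k}(1+km/n+\cdots)$ rather than through the exponent $-(n-m)f(u/n)$ as in the paper.

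Your version adds the localization to $[0,\delta]$ and the uniform remainder bound $|E_N(t)|\le c\,(Nt^3+N^2t^4)e^{\epsilon Nt}$. These justify the term-by-term integration, which the paper performs on an expansion valid only for fixed $u$ without comment. You also correctly note that the $O(n^{-2})$ remainder really needs $f\in C^3$ and $h\in C^2$ (satisfied here by analyticity), rather than the $C^2$/$C^1$ regularity stated in the lemma.
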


The case $k=m=1$, $h=g$ reduces to
$$n\cdot\int_0^\infty g(t)e^{-(n-1)f(t)}\dif t = 
\frac{g_0}{a}+\frac{a^2g_0+ag_1+bg_0}{na^3}+O\left(\frac{1}{n^{2}}\right),$$
the form used just below.

Applying Lemma~\ref{lem:boundary_laplace} with $f(t)=-\ln\mu_0(t)$
and $g=\mu_1$: from the previous paragraph, $a=\bar L$ and $b=\kappa_2$;
and since $\mu_1(0)=\E\{\ell(Y)\}=\bar\ell$ and
\begin{equation}
\mu_1'(t)=\frac{\dif}{\dif t}\E\{\ell(Y)e^{-tL(Y)}\}=-\E\{L(Y)\ell(Y)e^{-tL(Y)}\},
\end{equation}
so $\mu_1'(0)=-c_{\ell L}$, we have $g_0=\bar\ell$ and $g_1=-c_{\ell L}$.
Substituting into \eqref{eq:boundary_laplace},
\begin{equation}
\E\{R_n\} = n\cdot\int_0^\infty\mu_1(t)[\mu_0(t)]^{n-1}\dif t
= \frac{\bar\ell}{\bar L}+\frac1n\cdot\frac{\bar L^2\bar\ell-\bar Lc_{\ell L}+
\kappa_2\bar\ell}{\bar L^3}+O\left(\frac{1}{n^2}\right)
= \rho+\frac{C}{n}+O\left(\frac{1}{n^2}\right)
\end{equation}
with
\begin{equation}
C = \frac{\bar\ell(\bar L^2+\kappa_2)-c_{\ell L}\bar L}{\bar L^3}.
\end{equation}
We state this as Proposition~\ref{prop:bias}
below.

\begin{proposition}[Asymptotic bias]
\label{prop:bias}
\begin{equation}
\label{eq:bias_formula}
\lim_{n\to\infty}n\cdot(\E\{R_n\}-\rho)
=\frac{\bar\ell(\bar L^2+\kappa_2)-c_{\ell L}\bar L}{\bar L^3}\dfn C,
\end{equation}
where $\rho=\bar\ell/\bar L$.
\end{proposition}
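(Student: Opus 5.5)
The plan is to start from the exact representation of Theorem~\ref{thm:mean}, $\E\{R_n\}=n\int_0^\infty\mu_1(t)[\mu_0(t)]^{n-1}\dif t$, write $[\mu_0(t)]^{n-1}=e^{-(n-1)f(t)}$ with $f=-\ln\mu_0$, and apply Lemma~\ref{lem:boundary_laplace} with $k=m=1$ and $h=\mu_1$. Once the hypotheses of the lemma are checked, the expansion $\E\{R_n\}=h_0/a+(a^2h_0+ah_1+bh_0)/(na^3)+O(n^{-2})$ holds. Multiplying $\E\{R_n\}-\rho$ by $n$ and letting $n\to\infty$ then gives the claimed limit, provided the leading term equals $\rho$ and the $O(n^{-2})$ remainder vanishes after multiplication by $n$.

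\textbf{Step 1 (hypotheses and local data of $f$).} Since the dictionary is finite, $\mu_0(t)=\sum_y Q(y)e^{-tL(y)}$ is a finite sum of exponentials, so it is real-analytic and positive on $[0,\infty)$. Hence $f$ is $C^\infty$ with $f(0)=0$ and $f'(0)=\E\{L(Y)\}=\bar L=:a>0$. Differentiating twice gives
$$f''(t)=-\frac{\mu_0''(t)\mu_0(t)-[\mu_0'(t)]^2}{[\mu_0(t)]^2},$$
and at $t=0$ this equals $-(\E\{L^2\}-\bar L^2)=-\kappa_2$, so $b=\kappa_2$. Nonnegativity of $f$ on $[0,\infty)$ holds because $L(Y)\ge1$ gives $\mu_0(t)\le e^{-t}<1$ for $t>0$, so $f(t)\ge t$. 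This linear lower bound also provides the tail control that the lemma's ``suitable regularity'' requires: the contribution from $t\ge\delta$ is $O(n e^{-(n-1)\delta})$, which is exponentially small.

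\textbf{Step 2 (data of $h$).} Take $h=\mu_1$, which is also a finite exponential sum, hence smooth and bounded by $\ell_{\max}$. Then $h_0=\mu_1(0)=\bar\ell$, and differentiating under the finite sum gives $h_1=\mu_1'(0)=-\E\{L\ell\}=-c_{\ell L}$.

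\textbf{Step 3 (substitution).} Plugging in gives the leading term $h_0/a=\bar\ell/\bar L=\rho$ and the $1/n$ coefficient
$$\frac{\bar L^2\bar\ell-\bar L c_{\ell L}+\kappa_2\bar\ell}{\bar L^3}=\frac{\bar\ell(\bar L^2+\kappa_2)-c_{\ell L}\bar L}{\bar L^3}.$$
Therefore $n(\E\{R_n\}-\rho)=C+O(1/n)\to C$.

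\textbf{Main obstacle.} The delicate step is making sure Lemma~\ref{lem:boundary_laplace} applies rigorously, in particular that its $O(n^{-2})$ remainder is uniform here. I would handle this directly. Split the integral at $t=\delta$; the tail is exponentially small by the bound $f(t)\ge t$ from Step~1. On $[0,\delta]$, substitute $t=u/n$ and Taylor-expand $(n-1)f(u/n)=\bar L u-\bar L u/n-\kappa_2u^2/(2n)+O(u^3/n^2)$ and $\mu_1(u/n)=\bar\ell-c_{\ell L}u/n+O(u^2/n^2)$. Expanding the exponential to first order in $1/n$ and integrating against $e^{-\bar L u}$ then uses only the elementary Gamma integrals $\int_0^\infty u^je^{-\bar Lu}\dif u=j!/\bar L^{j+1}$, which reproduces the same coefficient $C$ as a consistency check.
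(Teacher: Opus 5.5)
Your proposal is correct and follows essentially the same route as the paper. Both start from the exact formula of Theorem~\ref{thm:mean}, apply Lemma~\ref{lem:boundary_laplace} with $k=m=1$, $h=\mu_1$, $f=-\ln\mu_0$, and read off $a=\bar L$, $b=\kappa_2$, $h_0=\bar\ell$, $h_1=-c_{\ell L}$ to get $\E\{R_n\}=\rho+C/n+O(n^{-2})$. Your extra checks on the lemma's hypotheses are a welcome addition that does not change the argument; in particular, the bound $f(t)\ge t$ from $L\ge1$ gives nonnegativity and the exponentially small tail more cleanly than the paper's concavity remark.
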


A comment is in order concerning an alternative
route for calculating the bias -- the delta-method.
Since $\bar U_n:=\Sigma_n/n$ and $\bar V_n:=\Lambda_n/n$ are exact
sample means of i.i.d.\ data, $\E\{\bar U_n\}=\bar L$ and
$\E\{\bar V_n\}=\bar\ell$ exactly, for every $n$. Writing
$g(u,v):=v/u$ and Taylor-expanding $R_n=g(\bar U_n,\bar V_n)$ to
second order around $(\bar L,\bar\ell)$, the first-derivative terms
vanish upon taking expectations, since the sample means are exactly
unbiased -- so the entire bias comes from the second derivatives:
\begin{equation}
C = \frac12\Bigl(g_{uu}\Var\{L\}+2g_{uv}\Cov\{L,\ell\}\Bigr)
= \frac{\rho\,\Var\{L\}-\Cov\{L,\ell\}}{\bar L^2},
\end{equation}
using $g_{uu}=2\rho/\bar L^2$, $g_{uv}=-1/\bar L^2$, $g_{vv}=0$ at
$(\bar L,\bar\ell)$ -- algebraically identical to
\eqref{eq:bias_formula}, with $\Var\{L\}=\kappa_2$ and
$\Cov\{L,\ell\}=c_{\ell L}-\bar L\bar\ell$.

This is a useful independent cross-check, but not a substitute for the
derivation above. The delta method, by construction, gives only the
\emph{leading} $O(1/n)$ correction: it starts from an asymptotic
expansion of $R_n$ itself and stops at second order once that
correction is in hand, since the first-derivative terms vanish
identically and nothing forces the expansion any further. The
Laplace-method route above instead starts from the \emph{exact}
formula \eqref{eq:mean}, valid at every finite $n$, and reaches the
bias by expanding \emph{that} formula asymptotically -- so the same
machinery, carried one order further, would give the $O(1/n^2)$ term
too, and (via Theorem~\ref{thm:general} and
Lemma~\ref{lem:boundary_laplace} applied repeatedly, as just below)
gives every higher moment -- variance, skewness, and beyond -- from a
single combinatorial formula, rather than requiring a fresh,
increasingly delicate multivariate Taylor expansion for each one in
turn. It also extends uniformly to the Markov case of
Section~\ref{sec:markov}: there, the delta method's natural
generalization requires tracking how a phrase's own fluctuation
correlates with the state it leaves the chain in -- a genuine
complication with no analogue here -- whereas an eigenvalue-perturbation
argument handles this automatically,
by perturbing the joint spectral object directly rather than
decomposing the bias into pieces by hand.

\paragraph{The variance term.}
Lemma~\ref{lem:boundary_laplace}, applied twice more, yields the
asymptotic variance directly. Write $\E\{R_n^2\}=I_1+I_2$ where
\begin{equation}
\label{eq:I1I2}
I_1=\int_0^\infty tn\mu_2(t)[\mu_0(t)]^{n-1}\dif t,\quad
I_2=\int_0^\infty tn(n-1)[\mu_1(t)]^2[\mu_0(t)]^{n-2}\dif t,
\end{equation}
each matching the Lemma's left-hand side with $k=2$ (since the extra
factor is $t=t^{k-1}$).
For $I_1$, $m=1$ and $h=\mu_2$, so $h_0=\mu_2(0)=s_{\ell2}$ where
$s_{\ell2}:=\E\{[\ell(Y)]^2\}$; since $I_1$ itself turns out to
contribute only at order $1/n$ to $\E\{R_n^2\}$, only the Lemma's
leading term is needed:
\begin{equation}
I_1 = \frac{s_{\ell2}}{n\bar L^2}+O\left(\frac{1}{n^2}\right).
\end{equation}
For $I_2$, $m=2$ and $h=[\mu_1]^2$, so $h_0=[\mu_1(0)]^2=\bar\ell^2$
and, by the product rule, $h_1=2\mu_1(0)\mu_1'(0)=-2\bar\ell\,c_{\ell L}$;
since $I_2$ contributes at leading order $O(1)$, both terms of the
Lemma are needed:
\begin{equation}
I_2 = \frac{\bar\ell^2}{\bar L^2}
+\frac{1}{n}\cdot\frac{\bar\ell\bigl(3\bar L^2\bar\ell-4\bar L\,c_{\ell L}+3\kappa_2\bar\ell\bigr)}{\bar L^4}
+O\left(\frac{1}{n^2}\right).
\end{equation}

\begin{proposition}[Asymptotic variance]
\label{prop:var}
With $s_{\ell2}=\E\{[\ell(Y)]^2\}$:
\begin{equation}
\label{eq:var_formula}
\lim_{n\to\infty}n\cdot\Var\{R_n\}=
\frac{\bar L^2(s_{\ell2}-\bar\ell^2)-2\bar L\bar\ell(c_{\ell L}-\bar L\bar\ell)+\kappa_2\bar\ell^2}{\bar L^4}.
\end{equation}
\end{proposition}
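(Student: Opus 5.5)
The plan is to combine the two expansions already in hand, $\E\{R_n^2\}=I_1+I_2$ and $\E\{R_n\}=\rho+C/n+O(1/n^2)$ from Proposition~\ref{prop:bias}. We form $\Var\{R_n\}=I_1+I_2-(\E\{R_n\})^2$ and extract the coefficient of $1/n$. The $O(1)$ terms cancel: $I_2$ contributes $\bar\ell^2/\bar L^2$ at leading order, and $(\E\{R_n\})^2$ contributes $\rho^2=\bar\ell^2/\bar L^2$. So $n\Var\{R_n\}$ converges to a finite limit, and we only need to track the order-$1/n$ pieces. These are the following:
\begin{itemize}
\item $I_1$ contributes $s_{\ell2}/\bar L^2$.
\item $I_2$ contributes $\bar\ell(3\bar L^2\bar\ell-4\bar Lc_{\ell L}+3\kappa_2\bar\ell)/\bar L^4$.
\item $(\E\{R_n\})^2$ contributes $-2\rho C$.
\end{itemize}
Since $(\rho+C/n+O(n^{-2}))^2=\rho^2+2\rho C/n+O(n^{-2})$, no cross terms enter at this order.

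First I will verify the two displayed expansions of $I_1$ and $I_2$ from Lemma~\ref{lem:boundary_laplace}, with $k=2$, $a=\bar L$ and $b=\kappa_2$.
\begin{itemize}
\item For $I_1$ ($m=1$) the prefactor is $n^{m-k}=n^{-1}$. The leading term is $(1!)h_0/a^2=s_{\ell2}/\bar L^2$, times $1/n$.
\item For $I_2$ ($m=2$) the prefactor is $n^0$. The correction is $\frac{1}{2na^4}\bigl(a^2h_0\cdot2\cdot3+2ah_1\cdot2+6bh_0\bigr)$, with $h_0=\bar\ell^2$ and $h_1=-2\bar\ell c_{\ell L}$. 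This gives $\bigl(3\bar L^2\bar\ell^2-4\bar L\bar\ell c_{\ell L}+3\kappa_2\bar\ell^2\bigr)/(n\bar L^4)$, which matches the paper's display.
\end{itemize}
The regularity hypotheses of the lemma hold trivially here. Indeed, $\mu_j$ are finite sums of exponentials, so they are smooth, and $f=-\ln\mu_0$ is positive on $(0,\infty)$ with $f'(0)=\bar L>0$.

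Next comes the algebra. We have
\[
2\rho C=\frac{2\bar\ell\bigl(\bar\ell\bar L^2+\bar\ell\kappa_2-c_{\ell L}\bar L\bigr)}{\bar L^4}.
\]
Adding the $I_1$ and $I_2$ pieces over the common denominator $\bar L^4$ and subtracting $2\rho C$ gives the numerator
\[
\bar L^2s_{\ell2}+3\bar L^2\bar\ell^2-4\bar L\bar\ell c_{\ell L}+3\kappa_2\bar\ell^2-2\bar L^2\bar\ell^2-2\kappa_2\bar\ell^2+2\bar L\bar\ell c_{\ell L}
=\bar L^2 s_{\ell2}+\bar L^2\bar\ell^2-2\bar L\bar\ell c_{\ell L}+\kappa_2\bar\ell^2 .
\]
This equals $\bar L^2(s_{\ell2}-\bar\ell^2)-2\bar L\bar\ell(c_{\ell L}-\bar L\bar\ell)+\kappa_2\bar\ell^2$, which is exactly the claimed expression. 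As a sanity check, it equals $\Var\{\ell-\rho L\}/\bar L^2$, the delta-method variance.

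The main obstacle is rigor about the error terms. The $O(1/n^2)$ remainders in Lemma~\ref{lem:boundary_laplace} must be uniform enough that the subtraction of two $O(1)$ quantities leaves a genuine $O(1/n^2)$ error. Because each expansion is a separate asymptotic statement with its own $O(n^{-2})$ remainder, and only finitely many are combined, this is automatic once the lemma is accepted. The only care needed is to carry $I_2$ and $\E\{R_n\}$ to second order, not just first, since the leading terms cancel.
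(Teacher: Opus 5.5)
Your proposal is correct and follows the paper's own proof: you expand $I_1$ and $I_2$ via Lemma~\ref{lem:boundary_laplace} with $k=2$, subtract $(\E\{R_n\})^2=\rho^2+2\rho C/n+O(n^{-2})$, cancel the $O(1)$ terms, and collect the rest over $\bar L^4$. Your explicit algebra, and your cross-check that the limit equals $\Var\{\ell-\rho L\}/\bar L^2$, are both right.
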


\begin{proof}
Adding $I_1$ and $I_2$ above gives $\E\{R_n^2\}$ to $O(1/n)$; subtracting
$(\E\{R_n\})^2=\rho^2+2\rho C/n+O(n^{-2})$ (squaring
Proposition~\ref{prop:bias}) and multiplying by $n$, the $O(1)$ terms
cancel exactly (since $\bar\ell^2/\bar L^2=\rho^2$), leaving
\eqref{eq:var_formula} after collecting the remaining terms over the
common denominator $\bar L^4$. 
\end{proof}

On the validation example of Section~\ref{sec:validation}, the analytic
formulas of Propositions \ref{prop:bias} and \ref{prop:var} evaluate to
a bias coefficient of $0.076818$ and an asymptotic scaled variance of
$0.159000$. The exact formula \eqref{eq:moments} evaluated at large $n$ gives
$n(\E\{R_n\}-\rho)\to0.076818$ and $n\Var\{R_n\}\to0.159000$, matching to
six significant figures and confirming both propositions numerically.

\begin{remark}[Design implication]
Proposition~\ref{prop:bias} shows that for a fixed parsing tree (hence
fixed $\bar L$ and $\Var\{L\}$), the bias $\E\{R_n\}-\rho$ is minimized
(i.e., $\E\{R_n\}$ converges to $\rho$ fastest) when $\Cov\{L,\ell\}$ is
maximized. This has a clear design interpretation: longer phrases should
be assigned longer codewords, i.e., $\Cov\{L,\ell\}$ should be made
positive. Whether a specific codeword assignment achieves this depends
on the source and tree; the bias formula gives a precise, finite-$n$
quantification of how far any given code deviates from the optimum.
\end{remark}

\subsection{Large deviations of the compression ratio}
\label{sec:largedev}

The Edgeworth expansion of Section~\ref{sec:edgeworth} characterizes the
behavior of $R_n$ in the $O(1/\sqrt{n})$ fluctuation region around $\rho$.
For deviations of fixed size $\Delta>0$ (independent of $n$), the
probability $\Pr\{R_n > \rho+\Delta\}$ decays exponentially in $n$, and
its exact exponential rate follows directly from Cram\'er's theorem.

The key observation is that the event $\{R_n > \rho+\Delta\}$ is
equivalent to a standard large-deviations event for a sum of i.i.d.\
random variables. Since
\begin{equation}
\label{eq:ld_equiv}
R_n > \rho+\Delta
\;\iff\;
\frac{\Lambda_n}{\Sigma_n} > \rho+\Delta
\;\iff\;
\sum_{i=1}^n \bigl[\ell(Y_i) - (\rho+\Delta)L(Y_i)\bigr] > 0,
\end{equation}
the event is determined by whether the partial sum of the i.i.d.\ random
variables $Z_i^{(\Delta)} := \ell(Y_i) - (\rho+\Delta)L(Y_i)$ exceeds zero.
Since $\E\{Z_i^{(\Delta)}\} = \E\{\ell(Y)\} - (\rho+\Delta)\E\{L(Y)\} = -\Delta\E\{L(Y)\} < 0$
for $\Delta>0$, this is a large-deviations event with the sum crossing
zero against its drift. An identical reduction handles the lower tail:
$\{R_n < \rho-\Delta\} \iff \sum_i [\ell(Y_i)-(\rho-\Delta)L(Y_i)] < 0$, with
$\E\{\ell(Y)-(\rho-\Delta)L(Y)\} = \Delta\E\{L(Y)\} > 0$.

By Cram\'er's theorem, for every $\Delta>0$ (all logarithms in this
subsection are natural logarithms):
\begin{align}
\frac{1}{n}\log\Pr\{R_n > \rho+\Delta\} &\;\to\; -I_+(\Delta), \label{eq:ld_upper}\\
\frac{1}{n}\log\Pr\{R_n < \rho-\Delta\} &\;\to\; -I_-(\Delta), \label{eq:ld_lower}
\end{align}
where the rate functions are
\begin{equation}
\label{eq:ld_rate_upper}
I_+(\Delta) \;=\; \sup_{\theta > 0}
\Bigl\{-\log\E\bigl\{e^{\theta[\ell-(\rho+\Delta)L]}\bigr\}\Bigr\},
\end{equation}
\begin{equation}
\label{eq:ld_rate_lower}
I_-(\Delta) \;=\; \sup_{\theta < 0}
\Bigl\{-\log\E\bigl\{e^{\theta[\ell-(\rho-\Delta)L]}\bigr\}\Bigr\}.
\end{equation}
For a real parameter $\theta$ and constant $c$, let $P_\theta^{(c)}$
denote the exponentially tilted law of $(L,\ell)$ obtained by weighting
each realization proportionally to $e^{\theta(\ell-cL)}$, and write
$\E_\theta^{(c)}\{\cdot\}$ for the corresponding expectation; below, $c$
is $\rho+\Delta$ for the upper tail and $\rho-\Delta$ for the lower
tail, and we abbreviate $P_\theta^{(c)}$ and $\E_\theta^{(c)}$ by
$P_\theta$ and $\E_\theta$ since $c$ is clear from context.
The supremum over $\theta>0$ in $I_+$ is unconstrained (the CGF is finite
for all $\theta>0$ provided $\ell$ has all exponential moments, which holds
for a finite alphabet); for a finite dictionary $L$ is bounded, so the CGF of
$\ell-(\rho-\Delta)L$ is finite for all $\theta\in\mathbb{R}$, and the
supremum over $\theta<0$ in $I_-$ is unconstrained; for codes with
unbounded phrase lengths, the supremum may be constrained to an
interval $(\theta_{\min},0)$.
The supremum in each case is achieved at a unique $\theta^*$. For the
upper tail, $\theta^*>0$ is the unique positive root of
\begin{equation}
\label{eq:ld_tilt_upper}
\frac{\E\{[\ell-(\rho+\Delta)L]\,e^{\theta^*[\ell-(\rho+\Delta)L]}\}}
{\E\{e^{\theta^*[\ell-(\rho+\Delta)L]}\}} = 0,
\end{equation}
and for the lower tail, $\theta^*<0$ is the unique negative root of
\begin{equation}
\label{eq:ld_tilt_lower}
\frac{\E\{[\ell-(\rho-\Delta)L]\,e^{\theta^*[\ell-(\rho-\Delta)L]}\}}
{\E\{e^{\theta^*[\ell-(\rho-\Delta)L]}\}} = 0.
\end{equation}
In both cases the condition expresses that the compression ratio under
the tilted distribution $P_{\theta^*}$ equals the target level:
$\E_{\theta^*}\{\ell\}/\E_{\theta^*}\{L\}=\rho+\Delta$ (upper tail) or
$\rho-\Delta$ (lower tail). Thus $P_{\theta^*}$ is the unique exponential
tilt of the original phrase-pair law under which the large-deviations
event is typical.

The rate functions $I_+,I_-$ are not built from a separate toolkit:
$\E\{e^{\theta[\ell-(\rho+\Delta)L]}\}=\E\{e^{\theta\ell-tL}\}$ along
the ray $t=\theta(\rho+\Delta)$ (and similarly for $I_-$ along
$t=\theta(\rho-\Delta)$), and this joint exponential moment is exactly
the object that reduces to $\mu_0(t)$ at $\theta=0$ and to $\mu_1(t)$
upon differentiating in $\theta$ at $\theta=0$
(Section~\ref{sec:background}). The exact moments of
Sections~\ref{sec:moments}--\ref{sec:laplace} and the large-deviations
rate functions here are two different slices of this same underlying
two-parameter family, evaluated in different regimes: $t$ alone for
the moments, and along the rays above for the tails.

\section{Comparison with V2F and F2V length codes}
\label{sec:comparison}

The three code families differ in which of the phrase length $L$ and
codeword length $\ell$ they allow to vary: F2V fixes $L$, V2F fixes
$\ell$, and V2V lets both vary. The fact that V2V can match or beat
the other two on the asymptotic rate $\rho$ is not surprising -- it
strictly contains them as special cases. The more informative question
is what this extra freedom does to the \emph{finite-$n$} behavior
captured by $C$, and what, if anything, it reveals about why some V2V
constructions substantially outperform both alternatives while others
do not.

\paragraph{Setup.} Fix a binary memoryless source with $p=P(0)=0.8$
($H=0.722$ bits/symbol) and the Tunstall tree $\{00,01,1\}$
($\bar L=1.8$, $\Var\{L\}=0.16$).

\paragraph{The structural decomposition.}
The bias formula of Proposition~\ref{prop:bias} rewrites as
\begin{equation}
\label{eq:bias_structure}
C = \frac{\rho\,\Var\{L\} - \Cov\{L,\ell\}}{\bar L^2},
\end{equation}
which separates cleanly by code family:
\begin{itemize}
\item \textbf{F2V}: $L$ is deterministic, so $\Var\{L\}=\Cov\{L,\ell\}=0$
and $C=0$ identically -- not a design achievement, but a triviality of
having no phrase-length randomness to create bias from. It buys
nothing for $\rho$.
\item \textbf{V2F}: $\ell$ is constant, so $\Cov\{L,\ell\}=0$ always,
giving $C=\rho\Var\{L\}/\bar L^2>0$ with no freedom to reduce it: the
fixed codeword length that limits $\rho$ to $O(1/\bar L)$ redundancy is
the same constraint that locks in this bias.
\item \textbf{V2V}: both vary, so $\Cov\{L,\ell\}$ is a genuine design
variable. Correlating $L$ and $\ell$ positively -- longer phrases
getting longer codewords -- reduces $C$ below what V2F could achieve
with the same tree and the same $\Var\{L\}$.
\end{itemize}
This freedom cuts both ways, though: the Huffman codeword assignment
that \emph{minimizes} $\rho$ need not create positive covariance. For
the tree above, Huffman assigns the shortest codeword to the most
probable phrase ($00$, $P=0.64$), which happens to be one of the
\emph{longest} phrases ($L=2$), giving $\Cov\{L,\ell\}=-0.128<0$ and
$C_{\rm V2V}=0.077$ -- \emph{larger} than $C_{\rm V2F}=0.055$ despite
$\rho_{\rm V2V}=0.756<\rho_{\rm V2F}=1.111$. The same tree admits a
different codeword assignment (giving the shortest codeword to the
\emph{shortest} phrase instead) that achieves $\Cov\{L,\ell\}=+0.160$
and $C=0$ exactly, at the cost of a worse $\rho=1$. V2F and F2V have no
such choice to make; V2V's freedom is precisely the ability to trade
between these two objectives, in either direction.

\paragraph{Why this matters: the Khodak code.}
The tension above raises the question of whether $\rho$ and $C$ can be
improved \emph{together} rather than traded off, and the Khodak
code \cite{drmota-szpan,bugeaud08} shows that they can. By the
conservation of entropy \cite{savari-cons}, $H(Y)=\bar L\,H$ for any
parsing tree, where $H(Y)$ is the phrase entropy; the Khodak
construction assigns near-Shannon codewords, $\ell(y)\approx-\log_2Q(y)$,
to a tree deliberately built to preserve non-uniform leaf
probabilities. By the asymptotic equipartition property, a
\emph{typical} phrase of length $L(y)$ satisfies $Q(y)\approx2^{-L(y)H}$,
so $\ell(y)\approx L(y)H$ holds for the typical phrases that carry
essentially all the probability mass as $L(y)$ grows -- not for every
individual leaf (a phrase consisting entirely of the single most
probable symbol, for instance, gets a codeword far shorter than
$L(y)H$), but for enough of the distribution to drive the averages
below. This achieves redundancy $r:=\rho-H=O(\bar L^{-5/3})$ --
strictly better than the $O(1/\bar L)$ available to V2F or to
Tunstall--Huffman. The same typical-phrase approximation gives
$\Cov\{L,\ell\}\approx H\,\Var\{L\}$, so substituting into
\eqref{eq:bias_structure},
\begin{equation}
\label{eq:khodak_C}
C \;\approx\; \frac{\rho\Var\{L\}-H\Var\{L\}}{\bar L^2}
\;=\; \frac{(\rho-H)\Var\{L\}}{\bar L^2} \;=\; \frac{r\,\Var\{L\}}{\bar L^2}
\;=\; O(r) \;=\; O(\bar L^{-5/3}).
\end{equation}
The bias coefficient shrinks \emph{at the same rate as the redundancy}
-- not a second, independent achievement, but a direct consequence of
the same design principle. Compare V2F: $\ell$ cannot track $L$ at
all, so $\Cov\{L,\ell\}=0$ is permanent regardless of how the tree is
built, and neither the $O(1/\bar L)$ floor on $r$ nor the resulting
$C=\rho\Var\{L\}/\bar L^2$ can be improved by this mechanism.

For a binary Bernoulli$(2/3)$ source, the explicit
Bugeaud--Drmota--Szpankowski construction \cite{bugeaud08} (convergent
denominator $q=5$) gives $\bar L=116.4$, $\rho=0.9211$ against
$H=0.9183$ ($r=0.0028$), and $\Var\{L\}=4824$; \eqref{eq:khodak_C}
predicts $C\approx0.00097$, close to the exact value $C=0.000962$ from
Proposition~\ref{prop:bias} ($\Cov\{L,\ell\}=4430$ against the
predicted $H\Var\{L\}=4430$). Table~\ref{tab:khodak} confirms the
resulting $O(1/n)$ convergence numerically, using the exact moment
formula \eqref{eq:mean} at every $n$:
\begin{center}
\begin{tabular}{rccc}
\hline
$n$ & Exact $\E\{R_n\}$ & Asymp.\ $\rho+C/n$ & Error (approx.$-$exact)\\
\hline
$1$    & $0.91994$ & $0.92206$ & $+0.00213$\\
$2$    & $0.92269$ & $0.92158$ & $-0.00111$\\
$5$    & $0.92144$ & $0.92129$ & $-0.00015$\\
$10$   & $0.92123$ & $0.92120$ & $-0.000033$\\
$20$   & $0.92116$ & $0.92115$ & $-0.0000082$\\
$50$   & $0.92112$ & $0.92112$ & $-0.0000013$\\
$100$  & $0.92111$ & $0.92111$ & $\approx 0$\\
$\infty$ & $\rho=0.92110$ & $\rho=0.92110$ & $0$\\
\hline
\end{tabular}
\captionof{table}{Exact vs.\ first-order asymptotic $\E\{R_n\}$ for the
explicit Khodak code of \cite{bugeaud08}, illustrating the $O(1/n)$
convergence rate of Proposition~\ref{prop:bias}.}
\label{tab:khodak}
\end{center}
The approximation is accurate to five decimal places by $n=50$, well
before the asymptotic regime is actually reached.

\section{Extension to Markov sources}
\label{sec:markov}

This section repeats the memoryless program of
Section~\ref{sec:moments} with the i.i.d.\ assumption dropped: a
boundary state is identified that renders successive phrases Markov
rather than independent (\S\ref{sec:markov_model}); the scalar
quantities $\mu_j(t)$ become matrices indexed by that state
(\S\ref{sec:markov_transforms}); the mean formula of
Theorem~\ref{thm:mean} becomes a matrix product
(\S\ref{sec:markov_worked}, validated against direct simulation); and
the Laplace-method bias analysis of Section~\ref{sec:laplace} becomes
a matrix-eigenvalue perturbation, with the correction term now
characterized by a Poisson equation rather than a plain derivative
(\S\ref{sec:markov_bias}, with a pointer to the underlying argument in
Appendix~B. The memoryless case reappears
throughout as an exact special case, not a separate limit.

We now extend the source model to the Markov case.
Specifically, in this section, $X_1,X_2,\dots$ is assumed an irreducible,
time-homogeneous, \emph{first-order} Markov chain on the alphabet
$\mathcal{X}$ of cardinality $\alpha$,
with transition probabilities $\Pr\{X_{k+1}=x'\mid X_k=x\}$, and initial state
$X_0$. The parsing tree, dictionary,
and codeword assignment are exactly as before.

\subsection{Source model: the boundary state $S_i$}
\label{sec:markov_model}

Let $\Sigma_i = L(Y_1)+\cdots+L(Y_i)$ denote the total number of
source symbols consumed through the $i$-th phrase (with $\Sigma_0\dfn 0$,
consistent with the $\Sigma_n$ of Section~\ref{sec:background}), so
phrase $Y_i$ consists of the symbols
$X_{\Sigma_{i-1}+1},\dots,X_{\Sigma_i}$. We define the state of
the source at the $i$-th phrase boundary as the last symbol consumed
before that boundary, that is,
\begin{equation}
\label{eq:state_def}
S_i\dfn X_{\Sigma_i}, \qquad i=0,1,2,\dots
\end{equation}
and so $S_0=X_0$. Thus the state set is $\mathcal{X}$ itself, of
size $\alpha$; $S_{i-1}$ is the symbol in force when phrase $Y_i$
begins, and $S_i$ is the symbol reached when phrase $Y_i$ ends and the
parser resets to the root of the parsing tree.

The initial state $S_0$ is taken to have the stationary
distribution $\pi$ of the phrase-boundary chain $\{S_i\}_{i\ge0}$ just
defined (shown to be a genuine Markov chain in
Lemma~\ref{lem:markov_boundary} below) -- a distribution generally
different from the ordinary stationary distribution of
$\{X_k\}_{k\ge0}$ itself, since phrase boundaries sample states at
variable, state-dependent intervals rather than at every symbol
(Section~\ref{sec:markov_worked} gives a numerical instance of this
gap).

Since phrase $Y_i$ is parsed by running the parsing tree against the
symbols $X_{\Sigma_{i-1}+1},X_{\Sigma_{i-1}+2},\dots$ until a leaf is
reached, the triple $(L(Y_i),\ell(Y_i),S_i)$ is a measurable function
of $S_{i-1}=X_{\Sigma_{i-1}}$ together with the (as yet unconsumed)
continuation of the chain from time $\Sigma_{i-1}$ onward. This lets
the Markov property of $\{X_k\}_{k\ge0}$ pass through the parsing
recursion to the phrase level:

\begin{lemma}[Markov property at phrase boundaries]
\label{lem:markov_boundary}
Conditioned on $S_{i-1}$, the triple $(L(Y_i),\ell(Y_i),S_i)$ is
independent of $(S_0,\dots,S_{i-2},Y_1,\dots,Y_{i-1})$, with conditional
law depending on $S_{i-1}$ alone -- and, by time-homogeneity, the same
function of $S_{i-1}$ for every $i$, not merely free of the extra
history at each fixed $i$. Consequently $\{S_i\}_{i\ge0}$ is a
time-homogeneous Markov chain on $\mathcal X$.
\end{lemma}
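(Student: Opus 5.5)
The plan is to apply the strong Markov property of the symbol chain $\{X_k\}_{k\ge0}$ at the random times $\Sigma_{i-1}$, after first checking that these are stopping times for the natural filtration $\mathcal F_k\dfn\sigma(X_0,\dots,X_k)$. We argue by induction on $i$. The event $\{\Sigma_{i-1}=k\}$ is decided by running the parser on $X_1,\dots,X_k$: it holds iff exactly $i-1$ leaves have been reached by time $k$ and the $(i-1)$-st is reached exactly at time $k$. Hence $\{\Sigma_{i-1}=k\}\in\mathcal F_k$. Moreover, on this event, the history $H_{i-1}\dfn(S_0,\dots,S_{i-2},Y_1,\dots,Y_{i-1})$ is a deterministic function of $(X_0,\dots,X_k)$. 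So $H_{i-1}$ is $\mathcal F_{\Sigma_{i-1}}$-measurable, and $S_{i-1}=X_{\Sigma_{i-1}}$. Finiteness holds because $L(y)\le\max_{y\in\mathcal D}L(y)<\infty$ for a finite dictionary, so $\Sigma_{i-1}\le(i-1)\max_y L(y)$ a.s.

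Next, write $(L(Y_i),\ell(Y_i),S_i)=\Psi(X_{\Sigma_{i-1}+1},X_{\Sigma_{i-1}+2},\dots)$. Here $\Psi$ is a fixed measurable map, the same for every $i$. It runs the parsing tree from the root until a leaf $y$ is reached, and returns $(L(y),\ell(y),\text{last symbol of }y)$. Because each new phrase starts at the root, $\Psi$ depends only on the post-$\Sigma_{i-1}$ symbols. Strictly, $\Psi$ uses only the next $L(y)$ symbols, which are bounded in number, so no infinite-sequence measure theory is needed.

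The core step is a direct computation, which avoids invoking an abstract strong Markov theorem. Fix a history value $h$, a state $s$, and a phrase $y$ with last symbol $s'$. Then
\begin{equation*}
\Pr\{H_{i-1}=h,\,S_{i-1}=s,\,Y_i=y\}=\sum_k\Pr\{H_{i-1}=h,\,\Sigma_{i-1}=k,\,X_k=s\}\,P(y\mid s).
\end{equation*}
Here $P(y\mid s)\dfn\prod_{r=1}^{L(y)}\Pr\{X_{r}=y_r\mid X_{r-1}=y_{r-1}\}$ with $y_0\dfn s$. This factorization follows from the ordinary (fixed-time) Markov property at time $k$, since $\{H_{i-1}=h,\Sigma_{i-1}=k,X_k=s\}\in\mathcal F_k$. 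Time-homogeneity makes $P(y\mid s)$ independent of $k$. Summing over $k$ gives
\begin{equation*}
\Pr\{Y_i=y\mid H_{i-1}=h,S_{i-1}=s\}=P(y\mid s).
\end{equation*}
The right-hand side is free of $h$ and of $i$. Since $(L(Y_i),\ell(Y_i),S_i)$ is a function of $Y_i$, the claimed conditional independence and time-homogeneity follow. Marginalizing over $Y_1,\dots,Y_{i-1}$ then gives
\begin{equation*}
\Pr\{S_i=s'\mid S_0,\dots,S_{i-1}\}=\sum_{y:\,\text{last}(y)=s'}P(y\mid S_{i-1}),
\end{equation*}
so $\{S_i\}$ is a time-homogeneous Markov chain.

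The main obstacle is the measurability bookkeeping. We must confirm that $\Sigma_{i-1}$ is a stopping time and that $H_{i-1}$ is determined by $X_0^{\Sigma_{i-1}}$, so that the decomposition over $\{\Sigma_{i-1}=k\}$ is legitimate. This rests on the prefix-free, complete-tree structure, which makes parsing online and unambiguous.
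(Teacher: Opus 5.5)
Your proof is correct and follows the same route as the paper: $\Sigma_{i-1}$ is a stopping time, the history $(S_0,\dots,S_{i-2},Y_1,\dots,Y_{i-1})$ is determined by $X_0,\dots,X_{\Sigma_{i-1}}$, and the next phrase is a fixed function of the symbols after $\Sigma_{i-1}$, so the strong Markov property gives the claim. The one difference is that you prove the strong Markov step by hand, by decomposing over $\{\Sigma_{i-1}=k\}$ and applying the fixed-time Markov property, which also produces the explicit phrase kernel $P(y\mid s)$; write that kernel with the one-step transition probabilities rather than as time-indexed conditionals $\Pr\{X_r=y_r\mid X_{r-1}=y_{r-1}\}$, which can be undefined when $\Pr\{X_{r-1}=y_{r-1}\}=0$.
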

\begin{proof}
$(L(Y_i),\ell(Y_i),S_i)$ is a function of $S_{i-1}=X_{\Sigma_{i-1}}$
and the continuation $(X_{\Sigma_{i-1}+1},X_{\Sigma_{i-1}+2},\dots)$
alone, where $\Sigma_{i-1}$ is a stopping time of $\{X_k\}_{k\ge0}$. By
the strong Markov property, conditioned on $X_{\Sigma_{i-1}}=S_{i-1}$
this continuation is independent of the past
$(X_0,\dots,X_{\Sigma_{i-1}})$ -- and so of
$$(S_0,\dots,S_{i-2},Y_1,\dots,Y_{i-1}),$$ 
functions of that past -- with
a law depending only on $S_{i-1}$, by time-homogeneity of the
transition kernel.
\end{proof}

This is what makes $\{S_i\}_{i\ge0}$ -- not $\{X_k\}_{k\ge0}$ itself -- the
Markov chain that matters for the moment analysis: the matrix
quantities below are indexed by $\mathcal X$ and evolve one
step per phrase rather than per symbol. The construction extends
verbatim to a $k$-th order Markov chain or a hidden finite-state
source, taking $S_i$ to be the last $k$ symbols or the hidden state;
we do not pursue this here.

\subsection{Matrix-valued moment generating functions}
\label{sec:markov_transforms}

For states $s,s'\in\mathcal{X}$, define the matrix-valued moment
generating function
\begin{equation}
[\boldsymbol{\mu}_j(t)]_{ss'} := \E\{[\ell(Y_i)]^je^{-tL(Y_i)}\mathbf{1}[S_i=s']\mid S_{i-1}=s\}, 
\qquad j=0,1,2,\dots,~~t\ge 0, \label{eq:muj_matrix_def}
\end{equation}
for any phrase index $i\ge1$ (by
Lemma~\ref{lem:markov_boundary}, this conditional law does not depend
on $i$). These are
$\alpha\times\alpha$ nonnegative matrices
parameterized by $t$: $\boldsymbol{\mu}_j(t)$ is the matrix analogue of the
scalar quantity $\mu_j(t)$ of Section~\ref{sec:background}, and
the case $j=0$ corresponds to the state transition
matrix at phrase boundaries, $[\boldsymbol{\mu}_0(t)]_{ss'} =
\E\{e^{-tL(Y_i)}\mathbf{1}[S_i=s']\mid S_{i-1}=s\}$. At $t=0$,
$[\boldsymbol{\mu}_0(0)]_{ss'}=\Pr\{S_i=s'\mid S_{i-1}=s\}$,
so $\boldsymbol{\mu}_0(0)$ is row-stochastic.

We assume throughout that
$\boldsymbol{\mu}_0(0)$ -- the phrase-boundary chain's own transition
matrix, as distinct from the transition matrix of $\{X_k\}_{k\ge0}$
itself -- is irreducible \emph{and aperiodic} (equivalently,
\emph{primitive}); this is a separate condition from the
irreducibility of $\{X_k\}_{k\ge0}$ assumed in Section~\ref{sec:markov}
(a tree can route every path ending in a given state through a
first symbol unreachable in one step from another state, so the two
irreducibilities do not simply hand each other over), and it holds for
every example in this paper. Aperiodicity is needed, not just
irreducibility: an irreducible but periodic chain can have several
eigenvalues tied for maximum modulus (e.g.\ the period-$2$ chain
$\left(\begin{smallmatrix}0&1\\1&0\end{smallmatrix}\right)$ has
eigenvalues $1$ and $-1$, both of modulus $1$), which would break the
spectral-gap argument used in Section~\ref{sec:markov_bias} and
Appendix~B to isolate the dominant eigenvalue's
contribution; primitivity is exactly what rules this out (by the
Perron--Frobenius theorem for primitive matrices, the Perron
eigenvalue is then \emph{strictly} greater in modulus than every other
eigenvalue).

A simple sufficient condition, easily checked without appeal to
Perron--Frobenius theory directly: if $\{X_k\}_{k\ge0}$ has a
\emph{fully positive} one-step transition matrix (i.e.\
$\Pr\{X_{k+1}=x'\mid X_k=x\}>0$ for every $x,x'\in\mathcal X$), then
$\boldsymbol\mu_0(0)$ has strictly positive entries throughout, and is
therefore automatically both irreducible and aperiodic. Indeed, full
positivity means any specific finite symbol sequence has positive
probability of occurring next, from any current state; since the
parsing tree is finite, every leaf corresponds to some such sequence,
so every leaf -- and hence every ending state -- is reachable with
positive probability from every starting state in a single phrase.
This condition is considerably stronger than needed (it fails, for
instance, whenever some transition is structurally forbidden), but it
covers most sources encountered in practice and requires no
computation beyond inspecting the source's own transition matrix.

Under this assumption, let $\pi$ denote the unique
stationary distribution of $\boldsymbol{\mu}_0(0)$ -- a row vector,
consistent with its left-multiplying matrices throughout (as in
$\pi\boldsymbol{\mu}_0(0)$ just below) -- satisfying
$\pi\boldsymbol{\mu}_0(0)=\pi$ and $\sum_{s}\pi_s=1$. For a
state-dependent single-phrase quantity $G(Y_i)$ (such as the phrase
length $L(Y_i)$, the codeword length $\ell(Y_i)$, or their product),
we write $\E_\pi\{G(Y_i)\} \dfn \sum_{s\in\mathcal{X}}\pi_s\E\{G(Y_i)\mid S_{i-1}=s\}$
for its expectation under the stationary regime; for $G=L$, this plays
the role
of $\bar L$ of Section~\ref{sec:background} in the Markov case, and is
the a.s.\ limit $\Sigma_n/n\to\E_\pi\{L(Y_i)\}$ by the ergodic theorem for
irreducible finite-state Markov chains.

\subsection{The first moment formula for the Markov case}

\begin{theorem}
\label{thm:markov}
If the initial state $S_0$ has the stationary distribution $\pi$ of 
the phrase-boundary chain, then
\begin{equation}
\E\{R_n\} = \int_0^\infty
\pi\left[\sum_{i=1}^n \boldsymbol{\mu}_0(t)^{i-1}\boldsymbol{\mu}_1(t)
\boldsymbol{\mu}_0(t)^{n-i}\right]\mathbf{1}\dif t,
\label{eq:markov}
\end{equation}
where $\mathbf{1}$ is the all-ones column vector and $\pi$ is understood to be
a row vector.
\end{theorem}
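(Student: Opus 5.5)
We mirror the proof of Theorem~\ref{thm:mean}, replacing independence by the Markov chain rule of Lemma~\ref{lem:markov_boundary}. First, as in \eqref{eq:Rn_integrand}, we write $R_n=\Lambda_n\int_0^\infty e^{-t\Sigma_n}\dif t$. This is legitimate because $\Sigma_n\ge n\ge1$. Exchanging expectation and integral gives $\E\{R_n\}=\int_0^\infty\E\{\Lambda_n e^{-t\Sigma_n}\}\dif t$ and, by linearity, $\E\{\Lambda_n e^{-t\Sigma_n}\}=\sum_{i=1}^n\E\{\ell(Y_i)\prod_{j=1}^n e^{-tL(Y_j)}\}$. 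The exchange needs justification. The integrand is nonnegative, so Tonelli's theorem suffices; and for a finite dictionary the vector $(L(Y_j),\ell(Y_j))_{j\le n}$ takes finitely many values, so the finite-sum argument of Theorem~\ref{thm:mean} also applies verbatim.

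The core step is to show, for each fixed $i$, that
\begin{equation*}
\E\Bigl\{\ell(Y_i)\prod_{j=1}^n e^{-tL(Y_j)}\Bigr\}=\pi\,\boldsymbol\mu_0(t)^{i-1}\boldsymbol\mu_1(t)\boldsymbol\mu_0(t)^{n-i}\mathbf 1 .
\end{equation*}
Each phrase $j$ contributes a factor $w_j=[\ell(Y_j)]^{\epsilon_j}e^{-tL(Y_j)}$, where $\epsilon_j=\mathbf 1[j=i]$. We insert indicators of the boundary states, $1=\sum_{s_1,\dots,s_n}\prod_j\mathbf 1[S_j=s_j]$, and condition on $S_0=s_0$, which has law $\pi$. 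We then peel off phrases one at a time from the last. Conditioning on $\mathcal F_{n-1}=\sigma(S_0,Y_1,\dots,Y_{n-1})$, Lemma~\ref{lem:markov_boundary} gives
\begin{equation*}
\E\{w_n\mathbf 1[S_n=s_n]\mid\mathcal F_{n-1}\}=[\boldsymbol\mu_{\epsilon_n}(t)]_{S_{n-1}s_n}.
\end{equation*}
This factor depends only on $S_{n-1}$, and the remaining weights $w_1,\dots,w_{n-1}$ together with the indicators are $\mathcal F_{n-1}$-measurable. Iterating down to $j=1$ collapses the expectation to
\begin{equation*}
\sum_{s_0,\dots,s_n}\pi_{s_0}\prod_{j=1}^n[\boldsymbol\mu_{\epsilon_j}(t)]_{s_{j-1}s_j}.
\end{equation*}
This sum is exactly the matrix product $\pi\,\boldsymbol\mu_{\epsilon_1}(t)\cdots\boldsymbol\mu_{\epsilon_n}(t)\mathbf 1$, and in it $\boldsymbol\mu_1(t)$ sits in position $i$. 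Summing over $i$ and integrating over $t$ then yields \eqref{eq:markov}.

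The main obstacle is making the peeling step rigorous. Two points need care. First, the conditional factor must be measurable with respect to the past sigma-field and must depend on that past only through $S_{j-1}$. Second, the joint conditional law of $(L(Y_j),\ell(Y_j),S_j)$ must be the same function of the boundary state for every $j$, so that the same matrices $\boldsymbol\mu_{\epsilon}(t)$ appear at each step. Both properties are exactly what Lemma~\ref{lem:markov_boundary} provides, via the strong Markov property at the stopping times $\Sigma_{j-1}$. One further point: the stationarity of $\pi$ is used only to identify the initial law of $S_0$. It is not needed for the product formula itself, so for a general initial law the same identity holds with $\pi$ replaced by the law of $S_0$.
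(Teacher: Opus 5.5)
Your proof is correct and follows the paper's argument: the same Laplace-transform representation and expansion over the marked phrase $i$, followed by peeling off one phrase at a time from the last, using the one-step conditional independence of Lemma~\ref{lem:markov_boundary}. The only difference is bookkeeping. The paper runs the peeling as a backward recursion for the vector $\phi_j(s)=\E\{\prod_{k\ge j}h_k\mid S_{j-1}=s\}$, justified through its stronger full-history claim, whereas you insert boundary-state indicators and read off the matrix product directly as a path sum. Your remark that only the initial law of $S_0$ is used, not its stationarity, is consistent with the paper's proof.
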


\begin{proof}
As in the proof of Theorem~\ref{thm:mean},
$\E\{R_n\}=\int_0^\infty\E\{\Lambda_ne^{-t\Sigma_n}\}\dif t$.
Expanding $\Lambda_ne^{-t\Sigma_n}=\sum_{i=1}^n \ell(Y_i)e^{-t(L(Y_1)+\dots+L(Y_n))}$:
\begin{equation}
\label{eq:markov_expand}
\E\{\Lambda_ne^{-t\Sigma_n}\} = \sum_{i=1}^n\E\bigl\{\ell(Y_i)e^{-tL(Y_i)}\prod_{j\ne i}e^{-tL(Y_j)}\bigr\}.
\end{equation}
For a given $i$, define $h_j\dfn e^{-tL(Y_j)}$ for $j\ne i$ and
$h_i\dfn\ell(Y_i)e^{-tL(Y_i)}$, so the $i$-th term of
\eqref{eq:markov_expand} is $\E\{\prod_{j=1}^n h_j\}$.
For $j=1,\dots,n+1$, define
\begin{equation}
\label{eq:phi_def}
\phi_j(s)\dfn\E\Bigl\{\prod_{k=j}^n h_k \Big| S_{j-1}=s\Bigr\}, \qquad s\in\mathcal X
\end{equation}
where it should be understood that the product is empty when $j=n+1$, so $\phi_{n+1}\equiv1$. This is
well-defined as a function of $s$ alone -- a conditional expectation
given a random variable is, by definition, always some function of
that variable -- with no appeal to the Markov property yet. What the
Markov property buys us is that these functions also satisfy the
\emph{backward recursion}
\begin{equation}
\label{eq:phi_recursion}
\phi_j(s) = \E\{h_j\,\phi_{j+1}(S_j)\mid S_{j-1}=s\}, \qquad j=n,n-1,\dots,1,
\end{equation}
i.e., that $\phi_j$ can be built from $\phi_{j+1}$ one phrase at a
time, rather than recomputing a fresh conditional expectation over
$k=j,\dots,n$ from scratch at every step. We prove
\eqref{eq:phi_recursion} via the following stronger claim, by
induction on $j$ decreasing from $n+1$ to $1$:
\begin{equation}
\label{eq:markov_ih}
\E\Bigl\{\prod_{k=j}^n h_k \;\Big|\; S_0,\dots,S_{j-1},Y_1,\dots,Y_{j-1}\Bigr\} = \phi_j(S_{j-1}),
\end{equation}
i.e., conditioning on the \emph{entire} history through phrase $j-1$,
rather than on $S_{j-1}$ alone as in \eqref{eq:phi_def}, changes
nothing. The case $j=n+1$ is immediate: both sides equal $1$, the
product on the left being empty and $\phi_{n+1}\equiv1$ by definition.

For the inductive step from $j+1$ to $j$ (assuming
\eqref{eq:markov_ih} at $j+1$, prove it at $j$), condition on
$(S_0,\dots,S_j,Y_1,\dots,Y_j)$ and use \eqref{eq:markov_ih} at $j+1$:
\begin{equation}
\E\Bigl\{\prod_{k=j}^n h_k\Big|S_0,\dots,S_{j-1},Y_1,\dots,Y_{j-1}\Bigr\}
= \E\Bigl\{h_j\,\phi_{j+1}(S_j)\Big|S_0,\dots,S_{j-1},Y_1,\dots,Y_{j-1}\Bigr\}.
\end{equation}
By Lemma~\ref{lem:markov_boundary} applied to phrase $j$,
$h_j\,\phi_{j+1}(S_j)$ -- a function of $(L(Y_j),\ell(Y_j),S_j)$ -- is
independent of $(S_0,\dots,S_{j-2},Y_1,\dots,Y_{j-1})$ given
$S_{j-1}$, so the right-hand side reduces to
$\E\{h_j\,\phi_{j+1}(S_j)\mid S_{j-1}\}$, a function of $S_{j-1}$
alone -- giving \eqref{eq:markov_ih} at $j$, and, comparing this same
computation against definition \eqref{eq:phi_def} of $\phi_j$, exactly
the recursion \eqref{eq:phi_recursion}. Only the \emph{one-step}
property of Lemma~\ref{lem:markov_boundary} is used, once per step of
the induction; the recursion is what accumulates it, phrase by phrase,
into \eqref{eq:markov_ih}.

Taking $j=1$ in \eqref{eq:markov_ih} gives
$\E\{\prod_{k=1}^n h_k\mid S_0=s\}=\phi_1(s)$ -- automatic from
\eqref{eq:phi_def} itself at $j=1$, and a useful consistency check;
averaging over
$S_0\sim\pi$ gives $\E\{\prod_{j=1}^n h_j\}=\sum_s\pi_s\phi_1(s)$.
Unwinding the recursion \eqref{eq:phi_recursion} and recognizing $[\boldsymbol\mu_0(t)]_{ss'}$,
$[\boldsymbol\mu_1(t)]_{ss'}$ from \eqref{eq:muj_matrix_def} at each step (the
matrix at step $j$ is $\boldsymbol\mu_1(t)$ if $j=i$ and $\boldsymbol\mu_0(t)$
otherwise),
\begin{equation}
\vec\phi_1 = [\boldsymbol\mu_0(t)]^{i-1}\boldsymbol\mu_1(t)[\boldsymbol\mu_0(t)]^{n-i}\,\mathbf1,
\end{equation}
where $\vec\phi_1$ is the column vector of values $\phi_1(s)$,
$s\in\mathcal X$. Hence the $i$-th term of \eqref{eq:markov_expand}
equals $\pi[\boldsymbol\mu_0(t)]^{i-1}\boldsymbol\mu_1(t)[\boldsymbol\mu_0(t)]^{n-i}\mathbf1$;
summing over $i=1,\dots,n$ and integrating over $t$ gives \eqref{eq:markov}.
\end{proof}

\subsection{Worked example and validation}
\label{sec:markov_worked}

We use the binary Markov source of Savari and Gallager
\cite{savari-gallager}: the state is the last bit emitted, and the
self-transition probability (probability of emitting the same bit as the
last one) is $q$. This source has two states $\{0,1\}$, with transition
probabilities $\Pr[S_i=s\mid S_{i-1}=s]=q$ and
$\Pr[S_i=1-s\mid S_{i-1}=s]=1-q$.

We use the parsing dictionary $\{00,01,1\}$ (the same as in
Section~\ref{sec:comparison}) and assign fixed-length codewords $\ell\equiv2$
(a V2F code). The three possible phrases and their contributions to
$\boldsymbol{\mu}_0(t)$ are:

\begin{itemize}
\item Phrase $1$ (one symbol equal to $1$): emitted from state $0$ with
probability $1-q$ (cross-transition) and from state $1$ with probability
$q$ (self-transition), driving the source to state $1$, with $L=1$.
\item Phrase $00$ (two $0$s): reached via symbol $0$ then another $0$;
probability from state $0$ is $q\cdot q=q^2$ and from state $1$ is
$(1-q)\cdot q=q(1-q)$, driving the source to state $0$, with $L=2$.
\item Phrase $01$: from state $0$ via $0$ (prob.\ $q$) then $1$
(prob.\ $1-q$), probability $q(1-q)$; from state $1$ via $0$ (prob.\
$1-q$) then $1$ (prob.\ $1-q$), probability $(1-q)^2$. Drives source to
state $1$, $L=2$.
\end{itemize}

Collecting these contributions into the matrix $\boldsymbol{\mu}_0(t)$
(rows indexed by starting state, columns by ending state):
\begin{equation}
\label{eq:mu0_matrix}
\boldsymbol{\mu}_0(t) = \begin{pmatrix}
q^2 e^{-2t} & (1-q)e^{-t}+q(1-q)e^{-2t}\\
q(1-q)e^{-2t} & qe^{-t}+(1-q)^2e^{-2t}
\end{pmatrix}.
\end{equation}
Since $\ell\equiv2$, $\boldsymbol{\mu}_1(t)=2\boldsymbol{\mu}_0(t)$ and \eqref{eq:markov} simplifies to
$\E\{R_n\}=2n\int_0^\infty\pi\boldsymbol{\mu}_0(t)^n\mathbf{1}\,\dif t$.

\begin{example}
\label{ex:markov}
Take $q=0.99$ (a highly persistent source). The stationary distribution
of parsing-point states is found by solving $\pi\boldsymbol{\mu}_0(0)=\pi$,
$\pi_0+\pi_1=1$, giving $\pi=(0.332,0.668)$ -- visibly different from
the ordinary stationary distribution of the symbol-level chain
$(X_k)_{k\ge0}$ itself, which is $(1/2,1/2)$ for \emph{every} $q$ by
the symmetry of its transition probabilities under $0\leftrightarrow1$;
this is the numerical instance, promised in
Section~\ref{sec:markov_model}, of the general fact that
phrase-boundary sampling need not preserve a chain's own stationary
law. Evaluating \eqref{eq:markov} by
numerical quadrature at $n=10$ phrases gives $\E\{R_{10}\}=1.6503$.
A Monte Carlo simulation of $2{,}000{,}000$ independent realizations of
the Markov parsing process gives
$\widehat{\E\{R_{10}\}}=1.6506\pm0.0003$, in agreement within one
standard error. Evaluating \eqref{eq:markov} at increasing $n$ (using the
$t=u/n$ substitution and full eigendecomposition of $\boldsymbol{\mu}_0(t)$ for
numerical stability at large $n$):
\begin{center}
\begin{tabular}{lcccc}
\hline
$n$ & $5$ & $10$ & $30$ & ($\infty$)\\
\hline
$\E\{R_n\}$ & $1.657$ & $1.650$ & $1.627$ & $\rho=1.497$\\
\hline
\end{tabular}
\captionof{table}{Finite-$n$ convergence of $\E\{R_n\}$ to the ergodic
limit $\rho$, for the Markov worked example of
Example~\ref{ex:markov}.}
\label{tab:markov_convergence}
\end{center}
The sequence decreases monotonically to the ergodic limit
$\rho=\E_\pi\{\ell\}/\E_\pi\{L\}=2/(\pi_0\cdot1.99+\pi_1\cdot1.01)=2/1.336=1.497$.
As a cross-check of the formula, write $\Lambda(t)$ for the
\emph{dominant} (Perron) eigenvalue of $\boldsymbol\mu_0(t)$ --
guaranteed simple and positive by the irreducibility of
$\boldsymbol\mu_0(0)$, and strictly greater in modulus than every
other eigenvalue by the additional aperiodicity assumed in
Section~\ref{sec:markov_transforms} -- with
$v(t)$ its corresponding right eigenvector and $u(t)$ its
corresponding left eigenvector, normalized so $u(t)v(t)\equiv1$. At
$t=0$, since $\boldsymbol\mu_0(0)$ is row-stochastic, $\Lambda(0)=1$
with $v(0)=\mathbf1$ (as $\boldsymbol\mu_0(0)\mathbf1=\mathbf1$) and
$u(0)=\pi$ (as $\pi\boldsymbol\mu_0(0)=\pi$, and indeed
$\pi\mathbf1=1$, consistent with the normalization). Differentiating
the eigenvalue equation $\boldsymbol\mu_0(t)v(t)=\Lambda(t)v(t)$ at
$t=0$ and left-multiplying by $u(0)=\pi$:
\begin{equation}
\pi\boldsymbol\mu_0'(0)v(0) + \pi\boldsymbol\mu_0(0)v'(0)
= \Lambda'(0)\,\pi v(0) + \Lambda(0)\,\pi v'(0);
\end{equation}
since $\pi\boldsymbol\mu_0(0)=\pi=\Lambda(0)\pi$, the second term on
each side is the same, $\pi v'(0)$, and cancels, leaving
$\Lambda'(0) = \pi\boldsymbol\mu_0'(0)\mathbf1$ (using $v(0)=\mathbf1$).
Since $[\boldsymbol\mu_0(t)\mathbf1]_s=\sum_{s'}[\boldsymbol\mu_0(t)]_{ss'}
=\E\{e^{-tL(Y_i)}\mid S_{i-1}=s\}$ (the indicator $\mathbf1[S_i=s']$
summed over $s'$ is just $1$), differentiating at $t=0$ gives
$\boldsymbol\mu_0'(0)\mathbf1=-g$ with $g_s:=\E\{L(Y_i)\mid S_{i-1}=s\}$,
so $\Lambda'(0)=-\pi g=-\sum_s\pi_s g_s=-\E_\pi\{L\}$ by the definition
of $\E_\pi\{L\}$ in Section~\ref{sec:markov_transforms} (this same $g$ reappears
as the reward vector of the Poisson equation in
Section~\ref{sec:markov_bias} below). Numerically,
$\Lambda'(0)=-2(q+1)/(2q+1)=-1.336$ (obtained directly by
differentiating the characteristic polynomial of $\boldsymbol\mu_0(t)$ at
$t=0$); indeed, $-\Lambda'(0)=1.336=\E_\pi\{L\}$, confirming the general
formula just derived.
\end{example}

\subsection{Asymptotic bias for Markov sources}
\label{sec:markov_bias}

The large-$n$ asymptotic analysis of \eqref{eq:markov} proceeds as in
Section~\ref{sec:laplace}: the substitution $t=u/n$ stabilizes the
numerical evaluation, and the resulting bias coefficient $n(\E\{R_n\}-\rho)$
converges to a finite limit as $n\to\infty$. This is 
governed by the dominant eigenvalue $\Lambda(t)$ of
$\boldsymbol\mu_0(t)$ and the reason is that, for large $n$,
$[\boldsymbol\mu_0(t)]^n$ itself is governed by $[\Lambda(t)]^n$ (times
a bounded projector term, standard for a matrix with a strictly
dominant eigenvalue) -- the matrix analogue of the scalar
$[\mu_0(t)]^n$ of the memoryless case, which trivially equals itself
raised to the $n$-th power. So extracting the $O(1/n)$ term from
\eqref{eq:markov} requires a Taylor series expansion of $\Lambda(t)$ around $t=0$
to the same order that Section~\ref{sec:laplace} needed for the
scalar quantity $f(t)=-\ln\mu_0(t)$. For
Example~\ref{ex:markov}, the results are displayed in Table
\ref{tab:markov_bias_convergence}.
\begin{center}
\begin{tabular}{lcccc}
\hline
$n$ & $2{,}000$ & $10{,}000$ & $50{,}000$ & ($n\to\infty$)\\
\hline
$\E\{R_n\}$ & $1.504$ & $1.499$ & $1.498$ & $1.497$\\
$n(\E\{R_n\}-\rho)$ & $12.2$ & $12.3$ & $12.4$ & $\approx12.4$\\
\hline
\end{tabular}
\captionof{table}{Numerical convergence of the bias coefficient
$n(\E\{R_n\}-\rho)$ to its limit, for
Example~\ref{ex:markov}.}
\label{tab:markov_bias_convergence}
\end{center}

A closed-form expression for the bias coefficient requires going one
order further than the eigenvalue derivative $\Lambda'(0)=-\E_\pi\{L\}$
already computed (Example~\ref{ex:markov}) -- exactly as the
memoryless case of Section~\ref{sec:laplace} needed not just
$f'(0)$ but also $f''(0)$. For a \emph{matrix} eigenvalue
problem, unlike the scalar case, getting the second-order term
$\Lambda''(0)$ requires first finding the first-order correction to
$\Lambda(t)$'s corresponding \emph{eigenvector} -- a standard fact of
matrix perturbation theory (the same mechanism as second-order
energy shifts needing first-order wavefunction corrections in
perturbation theory more generally) -- and this correction is what the
Poisson equation below solves for.

Specifically, consider the expansion $\boldsymbol\mu_0(t) = P + t\boldsymbol\mu_0'(0) + O(t^2)$
(with $P\dfn\boldsymbol\mu_0(0)$), $\Lambda(t)$'s corresponding right
eigenvector as
$v(t) = \mathbf{1} + t\,\be + O(t^2)$ (since $P\mathbf1=\mathbf1$, this
right eigenvector at $t=0$ is $\mathbf1$, and $\be$ is its unknown
first-order correction), and the eigenvalue itself as
$\Lambda(t) = 1 - \bar L t + O(t^2)$, where $\bar L:=\E_\pi\{L\}$.
Substituting into the eigenvalue equation
$\boldsymbol\mu_0(t)v(t)=\Lambda(t)v(t)$ and matching the coefficients
of $t$ on both sides gives
\begin{equation}
\label{eq:order_t}
(P-I)\be = -\bar L\mathbf1 - \boldsymbol\mu_0'(0)\mathbf1.
\end{equation}
Recall from Example~\ref{ex:markov} that
$[\boldsymbol\mu_0(t)\mathbf1]_s=\E\{e^{-tL(Y_i)}\mid S_{i-1}=s\}$, so
$\boldsymbol\mu_0'(0)\mathbf1=-g$ with $g_s\dfn\E\{L(Y_i)\mid S_{i-1}=s\}$
the expected phrase length given starting state $s$. Substituting into
\eqref{eq:order_t},
\begin{equation}
(P-I)\be = g-\bar L\mathbf1, \qquad\text{i.e.}\qquad (I-P)\be = \bar L\mathbf1-g.
\end{equation}
Since $(I-P)\mathbf1=0$, this equation only determines $\be$ up to an
additive multiple of $\mathbf1$; writing $w\dfn-\be$ and fixing this
freedom by the normalization $\pi w=0$ turns it into exactly the
\emph{Poisson equation}
\begin{equation}
\label{eq:Poisson}
(I-P)w = g - \E_\pi\{L\}\cdot\mathbf{1}, \qquad \pi w = 0,
\end{equation}
(this name and normalization are standard in Markov reward theory
\cite{meyn-tweedie},
where $(I-P)w=h$ with $\pi h=0$ is the equation for the
\emph{relative reward} $w$ associated with a per-state reward whose
$\pi$-average has already been subtracted off). Here $w$ -- the
relative-value vector -- is precisely
the relative-reward vector of Savari and Gallager \cite{savari-gallager},
computed there for the reward ``self-information of a phrase'' under
dictionary-size asymptotics; here the reward is ``phrase length $L$''
and the asymptotics are in phrase count.

This Poisson-equation correction is in fact the whole of what is
needed: a second, completely analogous Poisson equation for the
reward $\ell$, combined with a joint eigenvalue perturbation in both
the reward variable and $t$, yields the bias coefficient $C$ in closed
form for a general Markov V2V code -- not merely the V2F sub-case
tabulated above (writing $\Lambda_{\theta t}$ for the resulting mixed
second derivative, and $\Lambda''(0)$ for the second derivative of the
already-familiar $\Lambda(t)$ of Example~\ref{ex:markov}):

\begin{proposition}[Markov bias coefficient]
\label{prop:markov_bias}
For a Markov V2V code whose phrase-boundary chain is irreducible and
aperiodic (Section~\ref{sec:markov_transforms}), with stationary distribution
$\pi$, phrase-length mean $\bar L\dfn\E_\pi\{L\}$, codeword-length mean
$\bar\ell\dfn\E_\pi\{\ell\}$ (so $\rho=\bar\ell/\bar L$), and
$\bar c\dfn\E_\pi\{L\ell\}$,
\begin{equation}
\label{eq:markov_bias_main}
\lim_{n\to\infty}
n\bigl(\E\{R_n\}-\rho\bigr)= C=
\frac{\bar\ell\,\Lambda''(0) + \bar L\,\Lambda_{\theta t}}{\bar L^3},
\end{equation}
where
\begin{equation}
\Lambda''(0) = \pi\,\boldsymbol\mu_0''(0)\,\mathbf{1} - 2\,\pi\,\boldsymbol\mu_0'(0)\,w,
\qquad
\Lambda_{\theta t} = -\bar c - \pi\,\boldsymbol\mu_1(0)\,w + \pi\,\boldsymbol\mu_0'(0)\,w_\ell,
\end{equation}
$w$ solves \eqref{eq:Poisson} and $w_\ell$ solves the analogous equation
$(I-P)w_\ell = g_\ell - \bar\ell\,\mathbf{1}$, $\pi w_\ell=0$, with
$g_\ell(s)\dfn\E\{\ell(Y_i)\mid S_{i-1}=s\}$.
\end{proposition}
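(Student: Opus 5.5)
The plan is to embed $\boldsymbol\mu_0(t)$ and $\boldsymbol\mu_1(t)$ in a two-parameter tilted family. I would reduce the matrix formula \eqref{eq:markov} to the scalar boundary Laplace expansion of Lemma~\ref{lem:boundary_laplace}, applied to the Perron eigenvalue of that family. Define
\[
[\mathbf M(\theta,t)]_{ss'}\dfn\E\{e^{\theta\ell(Y_i)-tL(Y_i)}\mathbf 1[S_i=s']\mid S_{i-1}=s\},
\]
so that $\mathbf M(0,t)=\boldsymbol\mu_0(t)$ and $\partial_\theta\mathbf M(0,t)=\boldsymbol\mu_1(t)$. By the product rule, the bracket in \eqref{eq:markov} is exactly $\partial_\theta[\mathbf M(\theta,t)^n]$ at $\theta=0$. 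Hence
\[
\E\{R_n\}=\int_0^\infty\partial_\theta\bigl(\pi\mathbf M(\theta,t)^n\mathbf 1\bigr)\big|_{\theta=0}\dif t.
\]

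Next comes the spectral decomposition. By primitivity of $P=\boldsymbol\mu_0(0)$ and continuity, for $(\theta,t)$ near $(0,0)$ the matrix $\mathbf M$ has a simple dominant eigenvalue $\Lambda(\theta,t)$, with right eigenvector $v$ and left eigenvector $u$. I would normalize them as $\pi v\equiv1$ and $uv\equiv1$, so that $\mathbf M^n=\Lambda^n vu+O(|\lambda_2|^n)$ uniformly near the origin. For $t$ bounded away from $0$, every eigenvalue of $\mathbf M(0,t)$ is strictly below $1$ in modulus, because $L\ge1$ and the matrices are entrywise dominated by $e^{-t}P$. That region therefore contributes only $O(\gamma^n)$, and the subdominant part contributes $O(n\gamma^n)$ overall. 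Differentiating in $\theta$ gives
\[
n\Lambda^{n-1}\Lambda_\theta\,(\pi v)(u\mathbf1)+\Lambda^n\,\partial_\theta\bigl[(\pi v)(u\mathbf1)\bigr].
\]
The normalization is what makes the projector terms harmless. Since $\pi v\equiv1$, every derivative of $\pi v$ vanishes. Differentiating $uv\equiv1$ gives $(\partial u)v=-u\,\partial v$; at $(0,0)$, where $v=\mathbf1$ and $u=\pi$, this yields $(\partial u)\mathbf1=-\pi\,\partial v=0$ for both $\partial_\theta$ and $\partial_t$. Consequently the prefactor $(\pi v)(u\mathbf 1)$ equals $1+O(t^2)$ along $\theta=0$. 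Likewise $\partial_\theta[(\pi v)(u\mathbf1)]$ vanishes at $(0,0)$ and is $O(t)$, so $\int\Lambda^n\cdot O(t)\,\dif t=O(n^{-2})$. Up to $O(n^{-2})$, then,
\[
\E\{R_n\}=n\int_0^\infty\Lambda_\theta(0,t)\,\Lambda(0,t)^{n-1}\dif t .
\]

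I would then apply Lemma~\ref{lem:boundary_laplace} with $k=m=1$, $f(t)=-\ln\Lambda(0,t)$, and $h(t)=\Lambda_\theta(0,t)$. This gives $a=-\Lambda'(0)=\bar L$ (from Example~\ref{ex:markov}) and $b=-f''(0)=\Lambda''(0)-\bar L^2$. The value $h_0=\Lambda_\theta(0,0)=\pi\boldsymbol\mu_1(0)\mathbf1=\bar\ell$ follows by the same left-multiplication argument as in Example~\ref{ex:markov}, and $h_1=\Lambda_{\theta t}$. The lemma's $O(1/n)$ coefficient is
\[
\frac{a^2h_0+ah_1+bh_0}{a^3}=\frac{\bar L^2\bar\ell+\bar L\Lambda_{\theta t}+(\Lambda''(0)-\bar L^2)\bar\ell}{\bar L^3},
\]
which collapses to \eqref{eq:markov_bias_main}. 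As a consistency check, in the memoryless case $\Lambda''(0)=\E\{L^2\}$ and $\Lambda_{\theta t}=-c_{\ell L}$, which recovers Proposition~\ref{prop:bias}.

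It remains to compute $\Lambda''(0)$ and $\Lambda_{\theta t}$ by second-order perturbation, and I expect this to be the main computational obstacle, chiefly for the signs and normalization. Expand $\mathbf M v=\Lambda v$ to second order in $t$ and left-multiply by $\pi$. The first-order eigenvector correction is $\partial_t v=\be=-w$, with $w$ solving \eqref{eq:Poisson}, and the constraint $\pi w=0$ is consistent with $\pi v\equiv1$. This yields $\Lambda''=\pi\boldsymbol\mu_0''(0)\mathbf1+2\pi\boldsymbol\mu_0'(0)\be$, which is the stated formula. For the mixed derivative, $\partial_\theta v$ solves $(I-P)\partial_\theta v=\boldsymbol\mu_1(0)\mathbf1-\bar\ell\mathbf1$, so $\partial_\theta v=w_\ell$. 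Differentiating the eigen-equation in $\theta$ and then $t$, and applying $\pi$, gives
\[
\Lambda_{\theta t}=\pi\,\partial_t\boldsymbol\mu_1(0)\mathbf1+\pi\boldsymbol\mu_1(0)\be+\pi\boldsymbol\mu_0'(0)w_\ell .
\]
Here the $\Lambda_t\,\pi\partial_\theta v$ and $\Lambda_\theta\,\pi\partial_t v$ terms vanish by $\pi v\equiv1$. Substituting $\pi\,\partial_t\boldsymbol\mu_1(0)\mathbf1=-\bar c$ and $\be=-w$ gives the stated $\Lambda_{\theta t}$. A final point to check is that $\Lambda$ is real-analytic near $(0,0)$ (it is a simple eigenvalue), so the regularity hypotheses of the lemma hold.
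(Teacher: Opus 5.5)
Your proposal is correct. Its second half coincides with the paper's Appendix~B: you compute $\Lambda''(0)$ and $\Lambda_{\theta t}$ by differentiating $\Psi v=\Lambda v$ under the normalization $\pi v\equiv1$ and left-multiplying by $\pi$, which is exactly Lemma~\ref{lem:two_param} applied to $A=\Psi$.

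Where you genuinely differ is the first half. The paper never derives the identity $C=(\bar\ell\Lambda''(0)+\bar L\Lambda_{\theta t})/\bar L^3$ from \eqref{eq:markov}. Appendix~B only computes the two eigenvalue derivatives. The link to $n(\E\{R_n\}-\rho)$ rests on the informal remark in Section~\ref{sec:markov_bias} that $\boldsymbol\mu_0(t)^n$ is governed by $[\Lambda(t)]^n$ ``times a bounded projector term'', together with numerical checks.

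Your route supplies that missing step, in three moves:
\begin{itemize}
\item Writing the bracket in \eqref{eq:markov} as $\partial_\theta(\pi\Psi(\theta,t)^n\mathbf1)|_{\theta=0}$ collapses the sum over the position of $\boldsymbol\mu_1$ into a single derivative.
\item With $\pi v\equiv1$ and $uv\equiv1$, the projector factor $(\pi v)(u\mathbf1)=u\mathbf1$ equals $1+O(\theta^2+t^2)$. This is exactly what shows the eigenvector corrections cannot contribute at order $1/n$. Mere boundedness of the projector, as the paper phrases it, would not suffice: an $O(t)$ term in it would shift the bias.
\item The scalar Lemma~\ref{lem:boundary_laplace} with $f=-\ln\Lambda(0,t)$ and $h=\Lambda_\theta(0,t)$ then gives the formula directly. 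It also makes the memoryless collapse transparent, since there the prefactor is identically $1$.
\end{itemize}
The paper's abstract two-parameter lemma buys a clean, reusable perturbation computation. Your reduction buys an actual proof of the asymptotic statement.

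One small tightening. For $t\ge\delta$ you need a bound on the $\theta$-derivative, i.e.\ on the bracket itself, not on the eigenvalues of $\boldsymbol\mu_0(t)$. The entrywise bounds $\boldsymbol\mu_1(t)\le\ell_{\max}\boldsymbol\mu_0(t)\le\ell_{\max}e^{-t}P$ give $\pi[\cdots]\mathbf1\le n\ell_{\max}e^{-nt}$ uniformly in $t$. That region therefore contributes at most $\ell_{\max}e^{-n\delta}$.
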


This is an extended version of the paper, including the complete
derivation of Proposition~\ref{prop:markov_bias} in Appendix~B below,
in place of the proof sketch given in the version submitted for
publication.

\begin{proof}
The argument extends the single-variable eigenvalue perturbation
above ($\Lambda'(0)=-\bar L$, via the Poisson equation for $w$) to two
variables: a second, completely analogous Poisson equation for the
reward $\ell$ produces $w_\ell$, and a joint perturbation of the
dominant eigenvalue in both the reward variable and $t$ together
produces $\Lambda''(0)$ and the new mixed term $\Lambda_{\theta t}$.
Appendix~B gives the full derivation of this
argument; the formula is verified numerically to 10+ significant
figures against direct high-precision computation of the exact
formula \eqref{eq:markov} at large $n$, on two different Markov V2V
codes, and collapses exactly to Proposition~\ref{prop:bias} in the
memoryless limit.
\end{proof}

\begin{example}[Closed form for Example~\ref{ex:markov}]
\label{ex:markov_bias_closed}
Applying Proposition~\ref{prop:markov_bias} to the V2F code of
Example~\ref{ex:markov} ($\ell\equiv2$, so $\boldsymbol\mu_1(t)=2\boldsymbol\mu_0(t)$
and $\bar c=2\bar L$) gives, after simplification,
\begin{equation}
\label{eq:C_v2f_closed}
C(q) = \frac{q(2q^2-q+1)}{2(1-q)(1+q)^3}.
\end{equation}
At $q=0.99$, $C(0.99)=12.3753$, matching the numerical target of the
table above (validated independently to 10 significant figures against
direct high-precision computation of \eqref{eq:markov} at $n$ up to
$10^9$).
\end{example}

\begin{example}[A genuine V2V code under the same Markov source]
\label{ex:markov_bias_v2v}
Proposition~\ref{prop:markov_bias} applies equally when $\ell$ is
itself random and state-dependent. Take the same Savari--Gallager
source and dictionary $\{00,01,1\}$, now equipped with the Huffman
codeword assignment of Section~\ref{sec:comparison},
$(\ell(00),\ell(01),\ell(1))=(1,2,2)$ -- a genuine V2V code, since both
$L$ and $\ell$ vary across phrases. Here $\ell$ depends only on the
destination state ($\ell=1$ when the parser returns to the root via
``$00$'', $\ell=2$ otherwise), which gives $\rho(q)=(3q+2)/[2(q+1)]$
and, applying Proposition~\ref{prop:markov_bias},
\begin{equation}
\label{eq:C_v2v_closed}
C(q) = \frac{q(4q^2+q+1)}{4(1-q)(1+q)^3}.
\end{equation}
At $q=0.99$: $\rho=1.2487$ and $C(0.99)=18.5623$, again validated to
10 significant figures against direct high-precision computation of
\eqref{eq:markov}, and against an independent check using a second
codeword assignment for which $\ell$ does \emph{not} depend only on the
destination state.
\end{example}

Figure~\ref{fig:markov_bias} plots both closed forms \eqref{eq:C_v2f_closed}
and \eqref{eq:C_v2v_closed} over the full range $q\in(0,1)$, rather than
at the single value $q=0.99$ tabulated above. Both bias coefficients
diverge as $q\to1$ -- a highly persistent source mixes slowly, and the
Poisson-equation solutions $w,w_\ell$ that drive $\Lambda''(0)$ and
$\Lambda_{\theta t}$ grow correspondingly large -- and, notably, the two
curves cross at exactly $q=1/3$ (as follows directly from equating
\eqref{eq:C_v2f_closed} and \eqref{eq:C_v2v_closed}): for weakly persistent sources the V2F
code has the larger finite-$n$ bias, while for strongly persistent
sources (including the $q=0.99$ case tabulated in both examples) the
genuine V2V code's bias is larger, mirroring the memoryless-case
tension already noted in Section~\ref{sec:comparison} between
optimising a code for $\rho$ and optimising it for $C$.

\begin{figure}[h]
\centering
\includegraphics[width=0.75\textwidth]{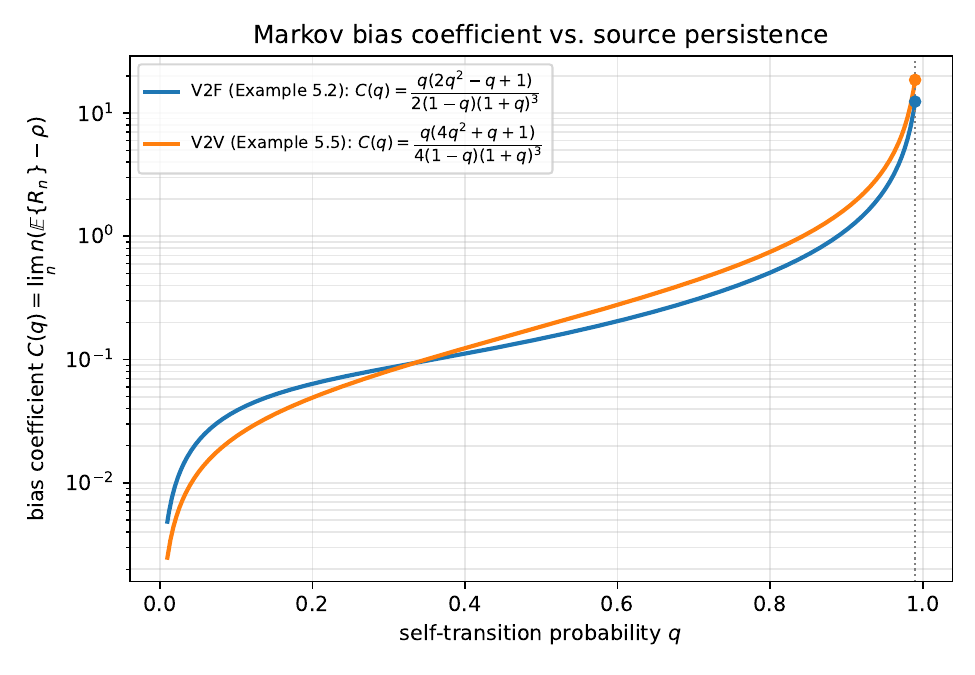}
\caption{The Markov bias coefficient $C(q)$ of
Proposition~\ref{prop:markov_bias}, for the V2F code of
Example~\ref{ex:markov_bias_closed} and the V2V code of
Example~\ref{ex:markov_bias_v2v}, as the self-transition probability
$q$ ranges over $(0,1)$. Markers indicate the $q=0.99$ values reported
in the text.}
\label{fig:markov_bias}
\end{figure}

\section{Conclusion}

The exact formula \eqref{eq:moments} gives, for any fixed V2V code applied to a
DMS, exact formulas for all integer moments of the realized
compression ratio $R_n$ at every finite $n$: mean, variance, and
skewness are computable by one-dimensional numerical quadrature, from
which an Edgeworth approximation to the CDF is obtained. The Laplace
method applied to \eqref{eq:moments} recovers the classical ratio-estimator
bias and variance asymptotics algebraically, by a different route than
the delta method (Section~\ref{sec:laplace}), and provides a design guideline (maximize $\Cov\{L,\ell\}$ for fastest
convergence) backed by a precise, finite-$n$ formula.

The extension
\eqref{eq:markov} to Markov sources replaces scalar quantities by matrix-valued
ones and is validated both against direct simulation and against the
correct ergodic limit; the $O(1/n)$ bias coefficient for this
Markov case is likewise obtained in closed form
(Proposition~\ref{prop:markov_bias}, Appendix~B, via
a joint eigenvalue perturbation that reduces exactly to
Proposition~\ref{prop:bias} in the memoryless limit.

Independently of the moment analysis, the observation that a V2V code is
an instance of a finite-state encoder lets us directly apply the
generalized Kraft inequality of \cite{merhav-gki}: via
Corollary~\ref{cor:lower}, this gives an explicit,
$m$-independent lower bound on the compression ratio in terms of the
dictionary parameters $M$, $\alpha$, and $\ell_{\max}$, with an $O(1/m)$
correction term that improves on the Ziv--Lempel bound whose analogous
constant grows linearly in $m$. The structural analysis of
Section~\ref{sec:comparison} decomposes the bias coefficient $C$ into a
$\Var\{L\}$ term and a $\Cov\{L,\ell\}$ term, each of which vanishes
identically for one of F2V and V2F; applied to the Khodak code, the
same decomposition shows that its improved $O(\bar L^{-5/3})$
redundancy and its correspondingly smaller bias coefficient are two
faces of the same design principle rather than independent
achievements.

\section*{Appendix A: Proof of the boundary Laplace lemma}
\label{app:laplace}
\renewcommand{\theequation}{A.\arabic{equation}}
\setcounter{equation}{0}

This appendix proves Lemma~\ref{lem:boundary_laplace}, the boundary
case of Laplace's method used in Section~\ref{sec:laplace}, in its
general form (arbitrary $k,m\ge1$).

\begin{proof}[Proof of Lemma~\ref{lem:boundary_laplace}]
Substitute $t=u/n$; then $t^{k-1}\dif t=u^{k-1}\dif u/n^k$, and
\begin{equation}
\frac{n!}{(n-m)!}\int_0^\infty t^{k-1}h(t)e^{-(n-m)f(t)}\dif t
= \frac{n!}{(n-m)!\,n^k}\int_0^\infty u^{k-1}h(u/n)e^{-(n-m)f(u/n)}\dif u.
\end{equation}
Now,
\begin{equation}
\dfrac{n!}{(n-m)!}=n(n-1)\cdots(n-m+1)=n^m\Bigl[1-\dfrac{m(m-1)}{2n}+O\left(\frac{1}{n^2}\right)\Bigr].
\end{equation}
Now, consider the Taylor series expansion of $-f$ around $t=0$: $-f(t)=-at+\frac b2t^2+O(t^3)$
as $t\to0$, so with $u$ fixed and $n\to\infty$,
\begin{equation}
-(n-m)f(u/n) =
(n-m)\Bigl[-\frac{au}{n}+\frac{bu^2}{2n^2}\Bigr]+O\left(\frac{1}{n^2}\right)
= -au + \frac{u}{n}\Bigl(ma+\frac b2 u\Bigr)+O\left(\frac{1}{n^2}\right),
\end{equation}
whence
\begin{equation}
e^{-(n-m)f(u/n)} = e^{-au}\Bigl[1+\frac un\Bigl(ma+\frac
b2u\Bigr)+O\left(\frac{1}{n^2}\right)\Bigr].
\end{equation}
Expanding $h$ as a Taylor series, we have
$h(u/n)=h_0+\dfrac{h_1u}{n}+O(1/n^2)$,
and multiplying the two
expansions, we obtain
\begin{equation}
u^{k-1}h(u/n)e^{-(n-m)f(u/n)}=e^{-au}\Bigl[u^{k-1}h_0 + \frac1n\bigl(mah_0u^k+\tfrac b2h_0u^{k+1}+h_1u^k\bigr)\Bigr]+O(n^{-2}).
\end{equation}
Integrating term by term using the identity $\int_0^\infty u^{k-1+j}e^{-au}\dif u=(k-1+j)!/a^{k+j}$,
\begin{equation}
\int_0^\infty u^{k-1}h(u/n)e^{-(n-m)f(u/n)}\dif u
= \frac{(k-1)!h_0}{a^k}+\frac1n\left(\frac{mh_0k!}{a^k}+
\frac{bh_0(k+1)!}{2a^{k+2}}+\frac{h_1k!}{a^{k+1}}\right)+O\left(\frac{1}{n^2}\right).
\end{equation}
Finally, multiplying by
$\frac{n!}{(n-m)!n^k}=n^{m-k}\bigl(1-\frac{m(m-1)}{2n}+O(1/n^2)\bigr)$
and collecting the two leading powers of $n$, the $O(1/n)$ coefficient
becomes
\begin{equation}
\frac{mh_0k!}{a^k}+\frac{bh_0(k+1)!}{2a^{k+2}}+\frac{h_1k!}{a^{k+1}}
-\frac{m(m-1)}{2}\cdot\frac{(k-1)!h_0}{a^k}
= \frac{(k-1)!}{2a^{k+2}}\Bigl(a^2h_0m(2k-m+1)+2ah_1k+bh_0k(k+1)\Bigr),
\end{equation}
using $2mk-m(m-1)=m(2k-m+1)$ to combine the $h_0$-terms. This is
exactly \eqref{eq:boundary_laplace}.
\end{proof}

\section*{Appendix B: Closed-form Markov bias coefficient}
\label{app:markov_bias}
\renewcommand{\theequation}{B.\arabic{equation}}
\setcounter{equation}{0}

This appendix derives the closed-form expression for
$n(\E\{R_n\}-\rho)$ used in Section~\ref{sec:markov}, for a general
Markov V2V code whose phrase-boundary chain is irreducible and
aperiodic (both $L$ and $\ell$ state-dependent random variables, not
merely the V2F sub-case of Example~\ref{ex:markov}). The argument
rests on a single general fact about how the Perron eigenvalue of a
matrix responds to two parameters at once (Lemma~\ref{lem:two_param}
below); once that lemma is established, the derivation is a short
substitution, not a lengthy one, and nothing is omitted.

\subsection*{Setup}

Define the joint matrix-valued moment generating function
\begin{equation}
\label{eq:joint_transform}
[\Psi(\theta,t)]_{ss'} := \E\{e^{\theta\ell(Y_i)-tL(Y_i)}\mathbf1[S_i=s']\mid S_{i-1}=s\},
\qquad s,s'\in\mathcal X,
\end{equation}
so that $\Psi(0,t)=\boldsymbol\mu_0(t)$ and
$\partial_\theta\Psi(\theta,t)|_{\theta=0}=\boldsymbol\mu_1(t)$. Let
$\Lambda(\theta,t)$ denote $\Psi(\theta,t)$'s dominant (Perron)
eigenvalue near the origin, so $\Lambda(0,t)=\Lambda(t)$ recovers the
scalar eigenvalue of Section~\ref{sec:markov_bias}.

\subsection*{A general two-parameter perturbation lemma}
\label{app:perturbation}

The remaining derivation is an instance of a single, general fact
about how the Perron eigenvalue of a matrix responds to two
parameters simultaneously -- stated and proved once here, in the
abstract, rather than worked out from scratch in the specific
notation of this problem.

\begin{lemma}[Two-parameter Perron eigenvalue perturbation]
\label{lem:two_param}
Let $A(\theta,t)$ be a matrix-valued function, jointly analytic near
the origin, with $A(0,0)=P$ primitive stochastic with Perron
eigenvalue $1$ and left/right eigenvectors $\pi,\mathbf{1}$ (normalized
$\pi\mathbf{1}=1$). Let $\Lambda(\theta,t)$, $v(\theta,t)$ denote the
Perron eigenvalue and right eigenvector of $A(\theta,t)$ near the
origin, normalized so $\pi\, v(\theta,t)\equiv1$ (so $v(0,0)=\mathbf1$,
$\Lambda(0,0)=1$). Write $A_x:=\partial_xA(0,0)$,
$A_{xy}:=\partial_x\partial_yA(0,0)$ for $x,y\in\{\theta,t\}$. Then
\begin{equation}
\label{eq:first_order_general}
\Lambda_x(0,0) = \pi A_x\mathbf1,
\end{equation}
and, writing $v_x$ for the (unique, once normalized by $\pi v_x=0$)
solution of the Poisson equation
\begin{equation}
\label{eq:poisson_general}
(P-I)v_x = \Lambda_x(0,0)\mathbf1 - A_x\mathbf1, \qquad \pi v_x=0,
\end{equation}
the second-order coefficients are
\begin{equation}
\label{eq:second_order_general}
\Lambda_{xy}(0,0) = \pi A_{xy}\mathbf1 + \pi A_xv_y + \pi A_yv_x.
\end{equation}
\end{lemma}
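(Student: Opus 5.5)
First I establish that $\Lambda$ and $v$ are analytic near the origin; then I differentiate the eigen-equation $A(\theta,t)v(\theta,t)=\Lambda(\theta,t)v(\theta,t)$ once and twice at $(0,0)$ and project each resulting identity onto $\pi$.

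\emph{Regularity.} Since $P$ is primitive, $1$ is a simple eigenvalue of $P$ and strictly dominates all other eigenvalues in modulus. Analytic perturbation theory for a simple eigenvalue (implicit function theorem applied to the characteristic polynomial, or Kato's resolvent argument) then gives an eigenvalue $\Lambda(\theta,t)$ that is jointly analytic near the origin and remains simple and dominant there. The spectral projector $\Pi(\theta,t)=\frac{1}{2\pi i}\oint(zI-A)^{-1}\dif z$ is also analytic. Setting $v:=\Pi\mathbf1/(\pi\Pi\mathbf1)$ gives an analytic right eigenvector with $\pi v\equiv1$, and the denominator is nonzero near the origin because $\pi\Pi(0,0)\mathbf1=1$.

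\emph{First order.} Differentiating in $x$ gives
\[
A_x\mathbf1+Pv_x=\Lambda_x\mathbf1+v_x .
\]
Left-multiplying by $\pi$ and using $\pi P=\pi$ and $\pi\mathbf1=1$ yields $\Lambda_x=\pi A_x\mathbf1$, which is \eqref{eq:first_order_general}. Rearranging the same identity gives exactly $(P-I)v_x=\Lambda_x\mathbf1-A_x\mathbf1$. Differentiating the normalization $\pi v\equiv1$ gives $\pi v_x=0$.

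\emph{Uniqueness of $v_x$.} The right-hand side of the Poisson equation has zero $\pi$-average, so it lies in the range of $P-I$, whose range is $\{h:\pi h=0\}$. The kernel of $P-I$ is spanned by $\mathbf1$ because $1$ is simple. Hence the normalization $\pi v_x=0$ singles out a unique solution, namely $v_x=-(I-P+\mathbf1\pi)^{-1}(\Lambda_x\mathbf1-A_x\mathbf1)$.

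\emph{Second order.} Differentiating the eigen-equation in $x$ and then $y$ at the origin gives
\[
A_{xy}\mathbf1+A_xv_y+A_yv_x+Pv_{xy}=\Lambda_{xy}\mathbf1+\Lambda_xv_y+\Lambda_yv_x+v_{xy}.
\]
Apply $\pi$ to both sides. The terms $\pi Pv_{xy}$ and $\pi v_{xy}$ cancel. The terms $\Lambda_x\pi v_y$ and $\Lambda_y\pi v_x$ vanish by the normalization. This leaves $\Lambda_{xy}=\pi A_{xy}\mathbf1+\pi A_xv_y+\pi A_yv_x$, which is \eqref{eq:second_order_general}. The diagonal case $x=y$ is included, with the mixed terms reading $2\pi A_xv_x$.

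\emph{Main obstacle.} The only step needing care is the regularity step. Once analyticity of $\Lambda$ and the normalized $v$ is granted, everything else is mechanical projection. Two points must be checked there: the normalization $\pi v=1$ must be compatible with analyticity, which the projector construction ensures; and the Perron eigenvalue of $A(\theta,t)$ must be the same branch that continues from $1$. The latter holds because the spectral gap persists under small perturbations. For nonnegative $A$, as in the application $\Psi(\theta,t)$ with real arguments, that branch is the Perron root by continuity.
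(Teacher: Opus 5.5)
Your proposal is correct and follows the paper's proof: you differentiate $Av=\Lambda v$ once and twice at the origin and left-multiply by $\pi$, and $\pi P=\pi$ together with $\pi v_x=\pi v_y=0$ removes the $v_{xy}$ and cross terms. Beyond the paper, you justify joint analyticity of $\Lambda$ and of the normalized $v$ (via the spectral projector) and the uniqueness of the normalized Poisson solution, both of which the paper takes for granted; you also correctly note that the first-order identity needs only $\pi P=\pi$ and $\pi\mathbf1=1$.
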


\begin{proof}
Differentiate the eigenvalue equation $A(\theta,t)v(\theta,t)=\Lambda(\theta,t)v(\theta,t)$
once in direction $x$ and evaluate at the origin:
$A_x\mathbf1+Pv_x = \Lambda_x(0,0)\mathbf1+v_x$, i.e.\
$(P-I)v_x=\Lambda_x(0,0)\mathbf1-A_x\mathbf1$, which is
\eqref{eq:poisson_general}; left-multiplying instead by $\pi$ (using
$\pi P=\pi$, $\pi\mathbf1=1$, and $\pi v_x=0$, the last from
differentiating the normalization $\pi v(\theta,t)\equiv1$ and
choosing this admissible value of the additive-$\mathbf1$ freedom in
$v_x$) gives $\pi A_x\mathbf1=\Lambda_x(0,0)$, which is
\eqref{eq:first_order_general}.

Differentiating once more, in direction $y$, and evaluating at the origin:
\begin{equation}
A_{xy}\mathbf1+A_xv_y+A_yv_x+Pv_{xy} = \Lambda_{xy}(0,0)\mathbf1+\Lambda_x(0,0)v_y+\Lambda_y(0,0)v_x+v_{xy}.
\end{equation}
Left-multiplying by $\pi$: the $Pv_{xy}$ and $v_{xy}$ terms both
become $\pi v_{xy}$ (using $\pi P=\pi$) and cancel; the
$\Lambda_x(0,0)v_y$ and $\Lambda_y(0,0)v_x$ terms vanish (using
$\pi v_x=\pi v_y=0$); what remains is exactly
\eqref{eq:second_order_general}.
\end{proof}

\paragraph{Application.} Take $A=\Psi$. Differentiating
\eqref{eq:joint_transform} gives $A_t\mathbf1=-g$, $A_\theta\mathbf1=g_\ell$
(where $g_\ell(s):=\E\{\ell(Y_i)\mid S_{i-1}=s\}$), $A_{tt}=\boldsymbol\mu_0''(0)$,
$A_\theta=\boldsymbol\mu_1(0)$, $A_t=\boldsymbol\mu_0'(0)$, and
$A_{\theta t}\mathbf1=-g_{L\ell}$ where
$g_{L\ell}(s):=\E\{L(Y_i)\ell(Y_i)\mid S_{i-1}=s\}$ (differentiating
the exponent $\theta\ell(Y_i)-tL(Y_i)$ once in each variable brings
down a factor $-L(Y_i)\ell(Y_i)$). By
\eqref{eq:first_order_general}, $\Lambda_t(0,0)=-\pi g=-\bar L$ and
$\Lambda_\theta(0,0)=\pi g_\ell=\bar\ell$, recovering the already-known
values. The two Poisson equations \eqref{eq:poisson_general}, for
$x=t$ and $x=\theta$ respectively, are
\begin{equation}
\label{eq:poisson_L}
(I-P)w = g-\bar L\mathbf{1}, \qquad \pi w=0
\qquad(w:=-v_t),
\end{equation}
\begin{equation}
\label{eq:poisson_ell}
(I-P)w_\ell = g_\ell-\bar\ell\,\mathbf{1}, \qquad \pi w_\ell=0
\qquad(w_\ell:=v_\theta),
\end{equation}
matching the Poisson equation of Section~\ref{sec:markov_bias}, with
the sign flip for $w$ tracing to $A_t\mathbf1=-g$ carrying an extra
minus sign that $A_\theta\mathbf1=g_\ell$ does not. Substituting into
\eqref{eq:second_order_general} with $(x,y)=(t,t)$ and $(x,y)=(\theta,t)$:
\begin{equation}
\label{eq:Lambda_pp_general}
\Lambda''(0) = \pi\,\boldsymbol\mu_0''(0)\,\mathbf{1} - 2\,\pi\,\boldsymbol\mu_0'(0)\,w,
\qquad
\Lambda_{\theta t} = -\bar c - \pi\,\boldsymbol\mu_1(0)\,w + \pi\,\boldsymbol\mu_0'(0)\,w_\ell,
\end{equation}
where $\bar c:=\E_\pi\{L\ell\}=\pi g_{L\ell}$ -- exactly the formulas
quoted in Proposition~\ref{prop:markov_bias}, obtained here by pure
substitution into Lemma~\ref{lem:two_param} rather than a fresh
derivation. The resulting formula was additionally verified
numerically to 10+ significant figures against direct high-precision
computation of the exact formula \eqref{eq:markov} at large $n$, on
two different Markov V2V codes built on the source of
Example~\ref{ex:markov} (one with $\ell$ depending only on the
destination state, one without that special structure), and collapses
exactly to Proposition~\ref{prop:bias} in the memoryless limit.

\end{document}